\DeclareSymbolFont{bbold}{U}{bbold}{m}{n}
\DeclareSymbolFontAlphabet{\mathbbold}{bbold}
\title{\LARGE \bf
Output Feedback Stabilization of Switched Linear Systems with Limited Information
}
\author{Masashi Wakaiki and Yutaka Yamamoto\thanks{
This work was supported by The Kyoto University Foundation.
M. Wakaiki and Y. Yamamoto are with the Department of Applied Analysis and Complex
Dynamical Systems, Graduate School of Informatics, Kyoto University, Kyoto
606-8501, Japan
(e-mail:{\tt  \ masashiwakaiki@ece.ucsb.edu};
{\tt \ yy@i.kyoto-u.ac.jp}).}%
}
\newtheorem{theorem}{Theorem}[section]
\newtheorem{lemma}[theorem]{Lemma}
\newtheorem{assumption}[theorem]{Assumption}
\newtheorem{remark}[theorem]{Remark}
\begin{document}

\maketitle
\thispagestyle{empty}
\pagestyle{empty}
\begin{abstract}                
We propose an encoding and control strategy for 
the stabilization of switched systems with limited information, 
supposing that the controller is given for each mode.
Only the quantized output and the active mode of the plant 
at each sampling time are transmitted to the controller. 
Due to switching,
the active mode of the plant may be different from that of the controller
in the closed-loop system.
Hence if switching occurs, 
the quantizer must recalculate a bounded set containing the estimation error
for quantization at the next sampling time.
We establish the global asymptotic stability
under a slow-switching assumption on dwell time and average dwell time.
To this end, we construct multiple discrete-time Lyapunov functions with
respect to the state and the size of the bounded set.
\end{abstract}

\section{Introduction}
Digital devices such as samplers, quantizers, and communication channels
play an indispensable role in low-cost, intelligent control systems.
This has motivated researchers to study control problems 
with limited information
due to sampling and quantization, 
as surveyed in~\cite{Nair2007,Hepsanha2007, Ishii2012}.
On the other hand, many systems encountered in practice have switching among
several modes of operation.
The stabilization problem of switched systems has also been studied 
extensively; see the book \cite{Liberzon2003Book},
the survey~\cite{Shorten2007, Lin2009}, and many references therein.

Both sampling/quantization and switching are
discrete-time dynamics and often appear in control systems simultaneously.
The authors of
\cite{Nair2003, Ling2010, Xiao2010, Xu2013} have studied 
quantized control for Markov jump discrete-time systems.
In \cite{Cetinkaya2012}, the stabilization of
Markov jump systems with uniformly sampled mode information is investigated.
However,  for switched systems with deterministic switching signals, 
most works deal with sampling/quantization and switching separately.
Based on the result in \cite{Liberzon2003Automatica},
our previous work~\cite{WakaikiMTNS2014} has developed 
an output encoding strategy for switched system
under an average dwell-time condition~\cite{Hespanha1999CDC} but
have not considered sampling.

The following difficulty arises from partial knowledge of the switching signal
due to sampling:
Switching can lead to the mismatch of the active modes between
the plant and the controller.
Accordingly, 
we need to prepare for another encoding strategy in case switching occurs.
For the quantization at the next sampling time, an encoding strategy
after a switch happens
must include the estimation of intersample information, e.g., the state behavior
in the sampling interval, from the transmitted data.

For switched systems with sampling and quantization, 
{\em state} feedback stabilization
has been studied under a slow-switching assumption
in \cite{Liberzon2014,Wakaiki2014IFAC}.
By contrast,
we assume that the information on 
the quantized {\em output} and the active mode of the plant
is transmitted to the controller at each sampling time.
The objective of this paper is to
develop an encoding and control strategy
achieving global asymptotic stabilization for given state feedback gains. 
The detection of switching within each sampling interval requires
a dwell-time assumption. On the other hand, we also use an average dwell-time
assumption for the convergence of the state to the origin.

Our proposed method can be seen as the extension of
\cite{Liberzon2014} from state feedback to output feedback
and also
that of \cite{Liberzon2003} from non-switched systems to switched systems.
A data-rate bound derived from our result is that from~\cite{Liberzon2003}
maximized over all the subsystems.


We organize this paper as follows. In Section II, first we show the switched 
linear system and the information structure we consider. After placing 
some basic assumptions, we state the main result.
Section III is devoted to the so-called ``zooming-out'' stage, whose objective
is to measure the output adequately.
In Section IV, we provide the encoding and control strategy 
that makes the state converges to the origin, and 
obtain a bound on the set in which the estimation error can reach when
a switch occurs.
In Section V, we show that the Lyapunov stability is achieved.
and Section VI contains a numerical example. 
Finally we conclude this paper in Section VII.

{\it Notation:~}
Let $\mathbb{Z}_+$ be the set of non-negative integers.
For $t \in \mathbb{R}$, $\lfloor t\rfloor$ is the largest integer not 
greater than $t$.

Let $\lambda_{\min}(P)$ and $\lambda_{\max}(P)$ denote 
the smallest and the largest eigenvalue of $P \in \mathbb{R}^{\sf n\times n}$.
Let $M^{\top}$ denote the transpose of $M \in \mathbb{R}^{\sf m\times n}$.

The Euclidean norm of $v \in \mathbb{R}^{\sf n}$ is
denoted by $|v| = (v^*v)^{1/2}$. 
The Euclidean induced norm of $M \in \mathbb{R}^{\sf m\times n}$ is defined by
$\|M\| = \sup \{  |Mv |:~v\in \mathbb{R}^{\sf n},~|v|= 1 \}$.
For $v = [v_1~\!\dotsb ~\!v_{\sf n} ]^{\top} \in \mathbb{R}^{\sf n}$,
its maximum norm is $|v|_{\infty} = \max\{|v_1|,\dots, |v_{\sf n}|\}$, and
the corresponding induced norm of $M \in \mathbb{R}^{\sf m\times n}$ is given by
$\|M\|_{\infty} = \sup \{  |Mv |_{\infty}:~
v\in \mathbb{R}^{\sf n},~|v|_{\infty}= 1 \}$.

\section{Output Stabilization of Switched Systems with
Limited Information}
\subsection{Switched Systems and Information Structure}
Consider the switched linear system
\begin{equation}
\label{eq:SLS}
\dot x = A_{\sigma}x+B_{\sigma}u,\quad y = C_{\sigma}x,
\end{equation}
where $x(t) \in \mathbb{R}^{\sf{n}}$ is the state,
$u(t) \in \mathbb{R}^{\sf{m}}$ is the control input, and
$y(t) \in \mathbb{R}^{\sf{p}}$ is the output.
For a finite index set $\mathcal{P}$,
the function $\sigma :~[0,\infty) \to \mathcal{P}$ is right-continuous and
piecewise constant.
We call $\sigma$ {\em switching signal} and
the discontinuities of $\sigma$ {\em switching times}.
Let $N_{\sigma}(t,s)$ stand for their number in the interval $(s,t]$. 

To generate the control input $u$, we can use
the following information on the output $y$ and the switching signal $\sigma$:

{\sl Sampling:~}
Let $\tau_s>0$ be the sampling period.
The output $y$ and the switching signal $\sigma$ are measured only at
sampling times
$k\tau_s$ $(k \in \mathbb{Z}_+)$.

{\sl Quantization:~}
Pick an odd positive number $N$.
The measured output $y(k\tau_s)$ is encoded by an integer in
$\{ 1,2\dots,N^{\sf{p}} \}$.
This encoded output and the sampled switching signal $\sigma(k\tau_s)$
are transmitted to the controller.

For the Lyapunov stability in Section V, we take $N$ to be odd.
Fig.~\ref{fig:SSSQOF} shows the closed loop we consider.
 \begin{figure}[tb]
 \centering
 \includegraphics[width = 7cm,clip]{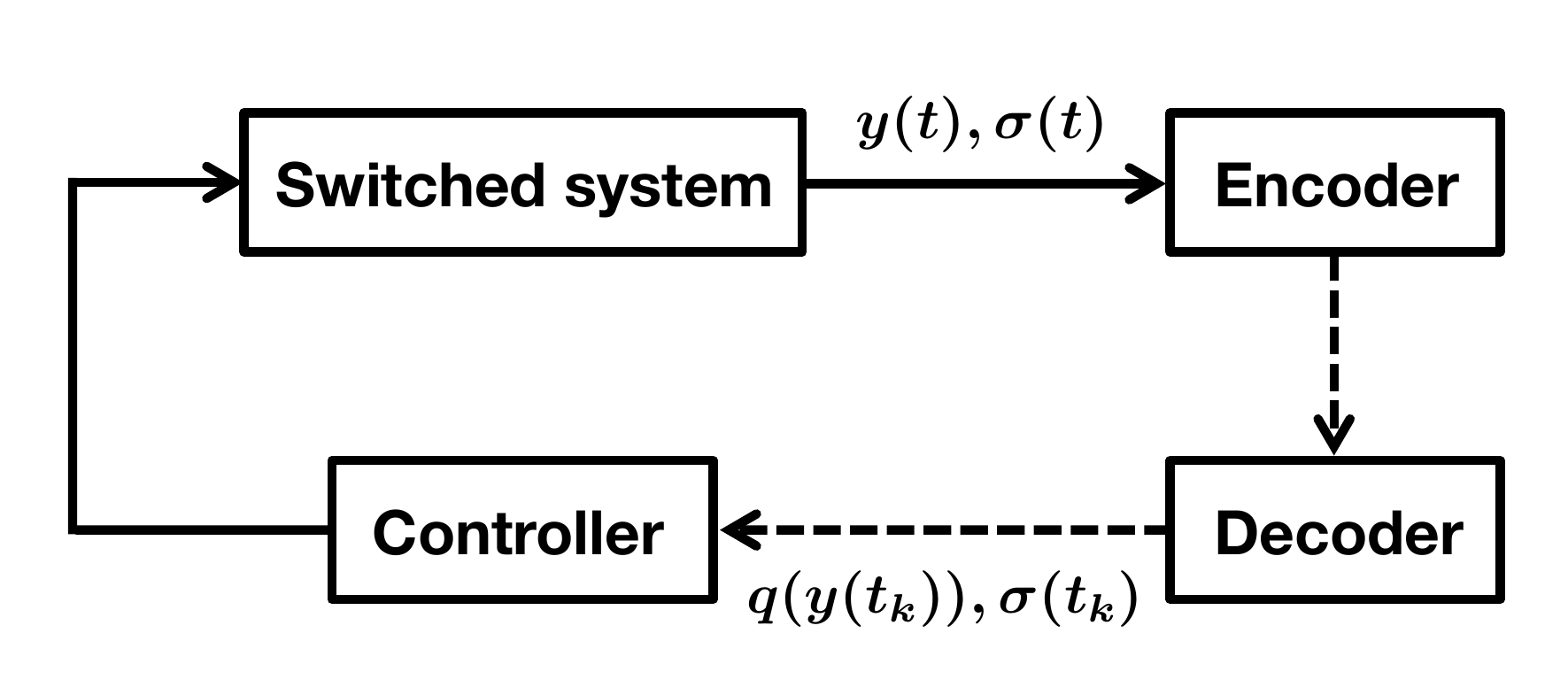}
 \caption{Sampled-data switched system with quantized output feedback}
 \label{fig:SSSQOF}
 \end{figure}

\subsection{Main Result}
Our first assumption is the stabilizability and observability of each subsystem.
\begin{assumption}
\label{ass:system}
{\em
For every $p \in \mathcal{P}$, $(A_p, B_p)$ is stabiliable and
$(C_p, A_p)$
is observable.
We choose $K_p \in \mathbb{R}^{\sf m \times n}$ so that
$A_p+B_pK_p$ is Hurwitz.
For all $A_p$. 
the sampling time $\tau_s$ is not pathological.
}
\end{assumption}
The non-pathological sampling time implies that
$(C_p, e^{A_p\tau_s})$ is observable in the discrete-time sense,
which is used for state reconstruction.

Next we assume that the switching signal $\sigma$ has
the following slow-switching properties:
\begin{assumption}
\label{ass:switching_time}
{\em
{\sl Dwell time:~}
Every interval between two switches is not smaller than
the sampling period $\tau_s$. That is, 
$N_{\sigma}(t,s) \leq 1$ if $t-s \leq \tau_s$.

{\sl Average dwell time~\cite{Hespanha1999CDC}:~}
There exist $\tau_a>0$ and $N_0 \in \mathbb{N}$ such that
\begin{equation}
\label{eq:ADT_cond}
N_{\sigma}(t,s) \leq N_0 + \frac{t-s}{\tau_a}
\end{equation}
for all $t > s \geq 0$.
}
\end{assumption}

Switching signals in
Assumption \ref{ass:switching_time} are called 
{\em hybrid dwell-time} signals~\cite{Vu2011, Liberzon2014}.
The assumption on dwell time is necessary for the
detection of a switch between sampling times,
while that on average dwell time is used in the proof that the state
converges to the origin.

Furthermore, we extend the quantization assumption for 
systems with a single mode in \cite{Liberzon2003} to
switched systems.
\begin{assumption}
\label{ass:quantization}
{\em
Let $\eta_p$ be the smallest natural number such that
$W_p$ defined by
\begin{equation}
\label{eq:WpDef}
W_p = 
\begin{bmatrix}
C_p \\ C_p e^{A_p\tau_s} \\ \vdots \\
C_p e^{A_p(\eta_p - 1)\tau_s}
\end{bmatrix}
\end{equation}
has full column rank.
Let $W_p^{\dagger}$ be a left inverse of $W_p$.
Then 
\begin{equation}
\label{eq:N_condition}
\|e^{A_p \eta_p \tau_s}W_p^{\dagger}\|_{\infty}
\cdot \max_{0\leq k \leq \eta_p-1}\left\|C_pe^{A_p k \tau_s}\right\|_{\infty} < N
\end{equation}
for all $p \in \mathcal{P}$.
}
\end{assumption}

Assumption \ref{ass:quantization} gives a lower bound on the available data rate
implicitly, and
\eqref{eq:N_condition} is the data-rate bound from \cite{Liberzon2003} 
maximized over the individual modes.
This assumption is used for finer quantization when a switch does not occur.
Note that as $\tau_s$ becomes small,
$e^{A_p \eta_p \tau_s}$ and $C_pe^{A_p k \tau_s}$
converge to $I$ and $C_p$ respectively, but that $W_p$ does not have full column 
rank in general
if $\tau_s = 0$.
Therefore
the left side of \eqref{eq:N_condition} may not decrease 
as $\tau_s$ tends to zero.

If $C_p = I$, then $W_p = I$ and $\eta_p = 1$.
Hence \eqref{eq:N_condition} is consistent to
the data rate assumption in the state feedback case \cite{Liberzon2014}.

The main result shows that
global asymptotic stabilization is possible if
the average dwell time is sufficiently large.
\begin{theorem}
\label{thm:stability_theorem2}
{\em
Consider the switched system \eqref{eq:SLS}, and let 
Assumptions \ref{ass:system}, \ref{ass:switching_time}, and 
\ref{ass:quantization} hold.
If the average dwell time $\tau_a$ in \eqref{eq:ADT_cond} 
is larger than a certain value, then
there exists an output encoding that achieves the following stability
for every $x(0) \in \mathbb{R}^{\sf n}$ and every $\sigma(0) \in \mathcal{P}$:

\noindent
{\sl Convergence to the origin:~}
\begin{equation}
\label{eq:convergence_x}
\lim_{t \to \infty}x(t) = 0.
\end{equation}

\noindent
{\sl Lyapunov stability:~}
To every $\varepsilon > 0$, there corresponds $\delta > 0$ such that
if $|x(0)| < \delta$, then $|x(t)| < \varepsilon$
for $t \geq 0$.
}
\end{theorem}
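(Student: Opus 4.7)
The plan is to extend the Brockett--Liberzon zooming paradigm to switched output-feedback systems and combine it with a multiple Lyapunov function argument in the spirit of Hespanha--Morse. The proof would be organized in three stages: a \emph{zooming-out} phase that captures the unknown initial state inside a computable box, a \emph{zooming-in} phase that drives that box, and hence the state, toward the origin, and a final Lyapunov analysis that converts the sampled-time contractions into continuous-time global asymptotic stability using the average-dwell-time hypothesis.

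Zooming-out: with the input set to zero and an exponentially growing quantizer range, one samples the output for as long as needed. Because the dwell time is at least $\tau_s$, at most one switch occurs in any sampling interval, so eventually a window of $\eta_p$ consecutive sampling instants arises over which the transmitted mode stays constant; Assumption~\ref{ass:system} then guarantees that $W_p$ has full column rank, so that left-inverting it produces a box in $\mathbb{R}^{\sf n}$ containing the sampled state, terminating this phase after finitely many sampling periods depending on $|x(0)|$.

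Zooming-in: over a constant-mode window of length $\eta_p \tau_s$, stacking the quantized outputs gives $W_p x_k$ plus a known input-response term; applying $W_p^{\dagger}$ with each raw output coordinate quantized into $N$ cells yields an $\ell_\infty$ error on $x_k$ of order $(1/N)\max_{0\le k\le \eta_p-1}\|C_p e^{A_p k\tau_s}\|_\infty$ times the current box size $E_k$, and propagating this error by $e^{A_p \eta_p \tau_s}$ together with Assumption~\ref{ass:quantization} produces a strict contraction $E_{k+\eta_p}\le \rho_p E_k$ with $\rho_p<1$. Meanwhile the closed-loop state follows the stable matrix $A_p+B_pK_p$ up to an additive perturbation driven by $E_k$. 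When a mode change is detected between sampling instants, the ongoing reconstruction is aborted; using the dwell-time bound of one switch per interval, one derives an enlarged but explicit box of size at most $M_{\mathrm{sw}} E_k$ covering the state, where $M_{\mathrm{sw}}$ depends only on the subsystem matrices and $\tau_s$, and the controller restarts the nominal cycle in the new mode.

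Lyapunov synthesis: for each $p\in\mathcal P$ pick $P_p\succ 0$ Lyapunov-matching $A_p+B_pK_p$ at the decision instants, and set $V_p(x,E)=x^\top P_p x + \mu_p E^2$ with $\mu_p$ tuned so that $V_p$ decays by a factor $\alpha<1$ per no-switch cycle. At a switching instant $V_{\sigma(\cdot)}$ jumps by at most a factor $\beta$ combining $M_{\mathrm{sw}}^2$ with $\max_{p,q}\lambda_{\max}(P_p)/\lambda_{\min}(P_q)$, and the standard Hespanha--Morse computation using \eqref{eq:ADT_cond} yields $V_{\sigma(k\tau_s)}(x(k\tau_s),E_k)\to 0$ whenever $\tau_a > \log\beta/|\log\alpha|$; this gives both \eqref{eq:convergence_x} and Lyapunov stability, the latter because $V_p$ lower-bounds $|x|^2$ while the intersample state is controlled by a Gronwall bound on each sampling interval. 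The chief obstacle I anticipate is the switch case: producing a finite and usable over-bound $M_{\mathrm{sw}}$ from only the quantized data and the sampled mode, so that the resulting Lyapunov jump is absorbed by the average-dwell-time margin; everything else is a careful but routine assembly of the zooming and multiple-Lyapunov-function machinery on the discrete-time sampling grid.
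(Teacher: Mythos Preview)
Your overall architecture---zooming-out, zooming-in with per-mode $\eta_p$-step reconstruction, and a multiple-Lyapunov-function synthesis under average dwell time---matches the paper's. But there is a genuine gap in the switched-case bound, precisely at the point you flag as the ``chief obstacle.''

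You claim that after a detected switch one obtains an enlarged box of size at most $M_{\mathrm{sw}}E_k$, with $M_{\mathrm{sw}}$ depending only on the subsystem matrices and $\tau_s$. This is not how the bound actually behaves. Between the (unknown) switching instant $T$ and the next sample, the controller still applies $u=K_p\xi$ in mode $p$ while the plant runs in mode $q$. The error $e=x-\xi$ then satisfies $\dot e=A_qe+H_{p,q}\xi$ with $H_{p,q}=(A_q-A_p)+(B_q-B_p)K_p$, an inhomogeneous equation driven by $\xi$. Consequently the post-switch bound has the form
\[
E_{k_0+k}\;\le\;\bar\delta_{p,q}(k)\,|\xi(k_0\tau_s)|_\infty\;+\;\bar\gamma'_{p,q}(k)\,E_{k_0},
\]
so it scales with $|\xi|$ (equivalently with $|x|$), not with $E_{k_0}$ alone. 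The Lyapunov jump therefore cannot be factored as $M_{\mathrm{sw}}^{2}\cdot\max_{p,q}\lambda_{\max}(P_p)/\lambda_{\min}(P_q)$; one must bound $x((k_0+k)\tau_s)$ and $E_{k_0+k}$ \emph{jointly} in terms of $|x(k_0\tau_s)|$ and $E_{k_0}$, obtaining $V_q(k_0+k)\le\bar\nu_{p,q}(k)V_p(k_0)$ with cross-coupling in $\bar\nu_{p,q}$. Your Hespanha--Morse computation then goes through with the corrected jump factor $\bar\nu=\max_{p\ne q,k}\bar\nu_{p,q}(k)$, yielding the threshold $\tau_a>(1+\log\bar\nu/\log(1/\nu))\,\eta\tau_s$.

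Two smaller points. First, the existence of an $\eta_p$-long constant-mode window during zooming-out does not follow from the dwell-time bound alone (which permits a switch in \emph{every} sampling interval); the paper invokes the average-dwell-time hypothesis here as well. Second, your Lyapunov-stability sketch does not work as stated: $E_{k_0}$ at the start of zooming-in is fixed by the quantizer schedule $\mu_n$, not by $|x(0)|$, so $V_p(k_0)$ need not be small when $|x(0)|$ is. The paper instead argues that for $|x(0)|$ small enough the encoded output always lands in the central box (this is where $N$ odd is used), so $\xi\equiv0$, $u\equiv0$, and the state evolves autonomously---hence remains small---up to a time $T_0$ after which the contraction estimate places it inside the $\varepsilon$-ball.
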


A constructive proof of Theorem \ref{thm:stability_theorem2} is given 
in the next sections.
We obtain a sufficient condition \eqref{eq:ADT_final_condition} on $\tau_a$
in the proof.
As in \cite{Liberzon2003, WakaikiMTNS2014, Liberzon2014},
we show the convergence to the origin 
by dividing the proof into the ``zooming-out'' stage and
the ``zooming-in'' stage.

\section{Zooming-out Stage}

The objective of the ``zooming-out'' stage is to generate 
an upper bound on the estimation error of the state.
We have to obtain such a bound by using the quantized output
and the switching signal at each sampling time.

Define $\eta$ by
\begin{align}
\label{eq:eta_def}
\eta = \max_{p \in \mathcal{P}} \eta_p.
\end{align}

At this stage, we set the control input $u = 0$.
Assume that the average dwell time $\tau_a$ satisfies
\begin{equation}
\label{eq:DTC_zo}
\tau_a > \eta \tau_s.
\end{equation}

Pick $\mu_0 > 0$ and $\chi > 0$, and define
\begin{equation}
\label{eq:j_def_zo}
\mu_{n} = e^{(1+\chi)\max_{p \in \mathcal{P}} \|A_p\|_{\infty} n\tau_s} \mu_0
\end{equation}
for $n\in \mathbb{Z}_+$. We construct the encoding function $Q_{n}$ by 
\begin{equation*}
Q_n(y) =
\begin{cases}
0
&  \quad \text{if~~}y(n \tau_s) \in
\{ y \in \mathbb{R}^{\sf p}:~| y |_{\infty} \leq \mu_n\} \\ 
1
&  \quad \text{otherwise}.
\end{cases}
\end{equation*}
The following theorem is used 
for the reconstruction of the state:
\begin{theorem}
\label{prop:property_of_ADT}
{\em
If the average dwell time $\tau_a$ in \eqref{eq:ADT_cond} 
satisfies \eqref{eq:DTC_zo}, then
there exists an integer $n_0 \geq 0$ such that
\begin{gather}
Q_{n}(y) = 0 \label{eq:zooming_out_QY_SD} \\
\sigma(n\tau_s) = \sigma(n_0\tau_s) =: p \label{eq:zooming_out_SS_SD}
\end{gather}
for $n =n_0,n_0+1,\dots, n_0 + \eta_p - 1$. 
Such $n_0$ satisfies $n_0 \leq n_1$, where
$n_1$ depends on $N_0$ and $\tau_a$ in \eqref{eq:ADT_cond}
but not on $\sigma$ itself.
}
\end{theorem}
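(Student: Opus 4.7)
The plan is to combine two ingredients: a $\sigma$-uniform exponential bound on the unforced output, and a pigeonhole on switching instants over a long initial horizon. With $u\equiv 0$ the dynamics reduce to $\dot x = A_{\sigma(t)}x$. Set $a := \max_{p\in\mathcal P}\|A_p\|_\infty$ and $c := \max_{p\in\mathcal P}\|C_p\|_\infty$. Since $\|e^{A_p t}\|_\infty \le e^{at}$ for every $p$ and the induced max-norm is submultiplicative, concatenating the state-transition operators across the finitely many switching times in $(0,n\tau_s]$ yields the $\sigma$-independent bound
\begin{equation*}
|y(n\tau_s)|_\infty \le c\,e^{a n\tau_s}\,|x(0)|_\infty \qquad (n\in\mathbb Z_+).
\end{equation*}
Comparing with $\mu_n = \mu_0\,e^{(1+\chi)a n\tau_s}$, the extra growth factor $e^{\chi a n\tau_s}$ forces $|y(n\tau_s)|_\infty \le \mu_n$, and hence $Q_n(y)=0$, for every $n \ge n^*$, where $n^* \in \mathbb Z_+$ depends on $|x(0)|,\mu_0,\chi,a,c$ but not on $\sigma$.

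Next I would partition the sampling indices $\{n^*,\dots,n^*+M\eta-1\}$ into $M$ consecutive blocks of $\eta$ indices and associate to the $k$-th block the half-open time span $I_k := (\,(n^*+(k-1)\eta)\tau_s,\,(n^*+k\eta-1)\tau_s\,]$. The $I_k$ are pairwise disjoint and contained in an interval of length less than $M\eta\tau_s$, so by \eqref{eq:ADT_cond} the total number of switches in $\bigcup_k I_k$ is at most $N_0 + M\eta\tau_s/\tau_a$. Under \eqref{eq:DTC_zo} one has $\eta\tau_s/\tau_a<1$, so the choice
\begin{equation*}
M := \left\lceil \frac{N_0\,\tau_a}{\tau_a-\eta\tau_s}\right\rceil + 1
\end{equation*}
makes this count strictly less than $M$, and the pigeonhole principle produces some $k^*\in\{1,\dots,M\}$ with no switches in $I_{k^*}$. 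Setting $n_0 := n^* + (k^*-1)\eta$ and $p := \sigma(n_0\tau_s)$, the absence of switches in $I_{k^*}$ makes $\sigma$ constant equal to $p$ at the sampling times $n_0,\dots,n_0+\eta-1$, and a fortiori at $n_0,\dots,n_0+\eta_p-1$ because $\eta_p \le \eta$; this gives \eqref{eq:zooming_out_SS_SD}, while the first paragraph gives \eqref{eq:zooming_out_QY_SD} as all these indices are at least $n^*$. The resulting bound $n_0 \le n^* + (M-1)\eta =: n_1$ completes the proof, and by construction $n_1$ depends on $N_0,\tau_a,\eta,\tau_s$ (through $M$) and on $|x(0)|,\mu_0,\chi,a,c$ (through $n^*$), but not on the particular $\sigma$.

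The main obstacle is the first step: obtaining an exponential bound on $|y|$ that is uniform over all admissible switching signals without invoking a common Lyapunov function. The key observation is that the induced max-norm is submultiplicative and satisfies the subsystem-wise estimate $\|e^{A_p t}\|_\infty \le e^{at}$, so an arbitrary composition of propagators collapses into a single exponential at rate $a$, and the deadzone growth rate $(1+\chi)a$ is designed to strictly exceed it. Once this uniform bound is secured, the pigeonhole is routine counting, and \eqref{eq:DTC_zo} appears exactly as the condition that drives the asymptotic switch density strictly below one per block of $\eta$ samples, guaranteeing that a switch-free block must occur within the first $n_1 - n^* + 1$ samples past $n^*$.
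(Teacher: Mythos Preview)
Your proof is correct and follows essentially the same two-step structure as the paper's own argument: first a $\sigma$-uniform exponential bound on $|y|_\infty$ to guarantee $Q_n(y)=0$ from some $n^*$ on (the paper states this more tersely as ``the growth rate of $\mu_n$ is larger than that of $|y|_\infty$''), then a pigeonhole on switching counts over a long window using the average dwell-time bound. The only cosmetic difference is that the paper invokes its Lemma~\ref{lem:ADT_upperbound} to obtain a continuous switch-free interval of length $\eta\tau_s$ and then extracts the $\eta$ sampling times inside it, whereas you align the blocks with sampling instants from the start and carry out the counting directly; the resulting bounds $n_1=n^*+(M-1)\eta$ and $n_1=\bar n_0+(m-1)\eta$ match up to the minor discrepancy between your $M$ and the paper's $m$.
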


To prove Theorem \ref{prop:property_of_ADT}, 
we use the following property of average dwell time:
\begin{lemma}[\cite{WakaikiMTNS2014}]
\label{lem:ADT_upperbound}
{\em
Fix an initial time $t_0 \geq 0$.
Suppose that $\sigma$ satisfies \eqref{eq:ADT_cond}.
Let $\tau_0 \in (0, \tau_a)$, and choose an integer $m$ such that
\begin{equation}
\label{eq:N_ADTcond}
m > \frac{\tau_a}{\tau_a - \tau_0} \left( N_0 - \frac{\tau_0}{\tau_a} \right).
\end{equation}
There exists $T \in [t_0, t_0+(m-1)\tau_0]$ such that 
$N_{\sigma}(T+\tau_0,T) = 0$.
}
\end{lemma}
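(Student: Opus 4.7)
The plan is to argue by contradiction: assume that for every $T \in [t_0, t_0 + (m-1)\tau_0]$ the window $(T, T + \tau_0]$ contains at least one switch, and then show that this forces more switches into a sub-interval than the average-dwell-time bound \eqref{eq:ADT_cond} allows.

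First I would build a greedy chain of switching times. Set $T_0 = t_0$, and for $i \geq 0$ define $T_{i+1}$ to be the earliest switching time strictly greater than $T_i$. Applying the contradiction hypothesis at $T_i$ forces $T_{i+1} \leq T_i + \tau_0$, as long as $T_i \leq t_0 + (m-1)\tau_0$. Let $k$ be the largest index with $T_k \leq t_0 + (m-1)\tau_0$. Then $T_{k+1}$ still exists by the same reasoning, and by maximality of $k$ satisfies $T_{k+1} > t_0 + (m-1)\tau_0$. The recursion also gives $T_{k+1} \leq t_0 + (k+1)\tau_0$, so together these inequalities force $k + 1 > m - 1$, i.e., $k \geq m - 1$.

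Now count. The times $T_1, T_2, \ldots, T_{k+1}$ are $k+1$ distinct switches, and telescoping yields $T_{k+1} - T_1 = \sum_{i=1}^{k} (T_{i+1} - T_i) \leq k \tau_0$. I would apply \eqref{eq:ADT_cond} on the interval $(T_1 - \varepsilon, T_{k+1}]$ for $\varepsilon > 0$ small enough that no additional switch is captured (such $\varepsilon$ exists because $T_1$ is the first switch after $t_0$). This gives $k+1 \leq N_0 + (T_{k+1} - T_1 + \varepsilon)/\tau_a$; letting $\varepsilon \to 0$ yields $k + 1 \leq N_0 + k\tau_0/\tau_a$, which rearranges to $k(\tau_a - \tau_0) \leq (N_0 - 1)\tau_a$. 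Combining with $k \geq m-1$ produces $m \leq 1 + (N_0 - 1)\tau_a/(\tau_a - \tau_0) = (N_0 \tau_a - \tau_0)/(\tau_a - \tau_0)$, which is exactly $\frac{\tau_a}{\tau_a - \tau_0}(N_0 - \tau_0/\tau_a)$, contradicting \eqref{eq:N_ADTcond}.

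The key subtlety, which I expect to be the one step needing careful justification, is invoking the ADT bound starting just before $T_1$ rather than at $t_0$. A naive application on the full window $(t_0, t_0 + m\tau_0]$ only yields the looser estimate $m \leq N_0 \tau_a/(\tau_a - \tau_0)$, which misses the bound in the lemma by the $\tau_0/(\tau_a-\tau_0)$ term. The improvement comes from noting that the ADT offset $N_0$ is ``spent'' on the switch $T_1$ itself, so only $k$ further switches (not $k+1$) need to be accommodated inside an interval of length at most $k\tau_0$; the $\varepsilon \to 0$ limit is the technical device that lets us expose this one-switch saving without losing anything in the inequality.
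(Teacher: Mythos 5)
The paper does not prove this lemma itself---it is quoted from \cite{WakaikiMTNS2014}---so there is no in-paper argument to compare against, and your proof must be judged on its own. It checks out completely: the greedy chain $T_1 < T_2 < \cdots$ with consecutive gaps at most $\tau_0$, the maximality argument giving $k \geq m-1$, and the application of \eqref{eq:ADT_cond} on $(T_1-\varepsilon, T_{k+1}]$ (valid since the finitely-many-switches property guaranteed by \eqref{eq:ADT_cond} lets you pick $\varepsilon \in (0, T_1 - t_0)$ capturing exactly the $k+1$ switches $T_1,\dots,T_{k+1}$) are all sound, and you are right that anchoring the window just before $T_1$ rather than at $t_0$ is precisely what recovers the constant $\frac{\tau_a}{\tau_a-\tau_0}\bigl(N_0-\frac{\tau_0}{\tau_a}\bigr)$ instead of the weaker $\frac{N_0\tau_a}{\tau_a-\tau_0}$.
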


\begin{proof}[Proof of Theorem 3.1.]
The growth rate of $\mu_n$ in \eqref{eq:j_def_zo} is 
larger than that of $|y|_{\infty}$ for arbitrary switching. Hence
there is an integer $\bar{n}_0$ such that $| y |_{\infty} \leq \mu_{n}$
for all $n \geq \bar{n}_0$, which leads to \eqref{eq:zooming_out_QY_SD}.

Let $m$ be an integer satisfying \eqref{eq:N_ADTcond} with 
$\eta \tau_s$ in place of $\tau_0$.
Since $\tau_a > \eta \tau_s$,
Lemma \ref{lem:ADT_upperbound} shows that
$
N_{\sigma}(T + \eta \tau_s,T) = 0
$
for some $T \in [\bar{n}_0 \tau_s, (\bar{n}_0 + (m-1)\eta ) \tau_s ]$.
The interval $(T, T+ \eta \tau_s]$ contains $\eta$ sampling times. Thus
we have an integer $n_0 \in [\bar{n}_0, \bar{n}_0 + (m-1) \eta]$ satisfying
\eqref{eq:zooming_out_QY_SD} and \eqref{eq:zooming_out_SS_SD}
for $n =n_0,n_0+1,\dots, n_0 + \eta_p - 1$, and $n_0 \leq n_1 := 
\bar{n}_0 + (m-1) \eta$.
\end{proof}


In conjunction with the dwell-time assumption, 
\eqref{eq:zooming_out_SS_SD} shows that
the active mode of the plant does not change in 
$[n_0\tau_s, (n_0 + \eta_p -1)\tau_s]$.
We can therefore reconstruct $x(n_0\tau_s)$ by using
$W_p$ in \eqref{eq:WpDef} and the output at 
$t = n_0\tau_s,\dots , (n_0 + \eta_p -1)\tau_s$:
\begin{equation}
\label{eq:x_j0}
x(n_0\tau_s) = W_p^{\dagger}
\begin{bmatrix}
y(n_0\tau_s) \\
\vdots \\
y((n_0 + \eta_p - 1)\tau_s)
\end{bmatrix}.
\end{equation}
The rest of the procedure is the same as in the non-switched 
case \cite{Liberzon2003}.
Combining
\eqref{eq:zooming_out_QY_SD} and \eqref{eq:x_j0}, 
we obtain
\begin{equation}
\label{eq:En0_def}
|x(n_0 \tau_s)|_{\infty} \leq  \|W_p^{\dagger}\|_{\infty}\cdot
\mu_{n_0+\eta_p - 1} =: E_{n_0}.
\end{equation}
It follows that
\begin{align}
|x((n_0 + \eta_p)\tau_s)|_{\infty} 
\leq 
e^{\max_{p \in \mathcal{P}} \|A_p\|_{\infty} \tau_s} 
&\left\|
e^{A_p(\eta_p - 1)\tau_s}
\right\|_{\infty} 
E_{n_0} 
=: E_{n_0+\eta_p}.
\label{eq:En0+etap}
\end{align}
Define the estimated state $\xi$ at $t =(n_0+\eta_p)\tau_s$ by
\begin{equation}
\label{eq:ES_Initial}
\xi((n_0+\eta_p)\tau_s) = 0.
\end{equation}
Then the error $e=x -\xi$ satisfies
$
|e((n_0+\eta_p)\tau_s)|_{\infty} \leq E_{n_0+\eta }.
$
This completes the ``zooming-out'' stage.

\section{Zooming-in Stage}
Here we construct an encoding and control strategy for the
convergence of the state to the origin.
Since the size $E_k$ of the quantization region increases
after a switch occurs, the term ``zooming-in'' may be misleading.
However, in order to contrast the ``zooming-out'' phase in the
previous section, 
we call the stage in this section the ``zooming-in'' stage as in
\cite{Liberzon2003Automatica,Brockett2000}.

Let $t_0=k_0\tau_s \geq 0$ be the initial time of the zooming-in stage
or the time at which the upper bound $E_k$ of the estimation error is updated. 
Let $\xi$ and $e$ be the estimated state and the estimation error $x - \xi$,
respectively.
Assume that $\sigma(k_0\tau_s) = p$ and 
$|e(k_0\tau_s)|_{\infty} \leq E_{k_0}$.

\subsection{Basic encoding and control method}
If no switch happens, then
we can use the encoding and control method
for systems with a single mode in \cite{Liberzon2003}.
However, after a switch occurs, a modified upper bound on 
the estimation error is needed for the next quantized measurement.
We shall obtain the upper bound in 
Section IV. C. 1).
In this subsection, assuming that the state estimate $\xi(k_0)$ and
the error bound $E_{k_0}$
are derived, we
briefly state the encoding and control method 
because it will be needed in the sequel.

Let $\sigma(k_0\tau_s) = p$.
If no switch occurs in $(t_{k_0}, t_{k_0+\eta_p \tau_s}]$,
we set $k=\eta_p$, and otherwise we define $k$
by the minimum integer in the interval $[1,\eta_p]$ such that 
$\sigma((k_0+k-1)\tau_s) \not= \sigma((k_0+k)\tau_s)$.
We generate the state estimate $\xi$ and the output estimate $\hat y$ by
\begin{equation}
\label{eq:xi_deq}
\dot \xi = (A_p + B_p K_p)\xi,\quad
\hat y = C_p \xi
\end{equation}
for $t \in [k_0\tau_s, (k_0+k) \tau_s)$, and set
the control input 
\begin{equation}
\label{eq:u_def}
u = K_p\xi.
\end{equation}
Since
$
\dot x 
= A_p x + B_p K_p \xi,
$
it follows that $e = x - \xi$ satisfies
\begin{equation}
\label{eq:e_deq}
\dot e = A_p e.
\end{equation}
If $l=0,\dots,k-1$, then
\begin{align*}
| y((k_0+l)\tau_s) - \hat y((k_0+l)\tau_s) |_{\infty}
&\leq \left\|C_p e^{A_p l \tau_s} \right\|_{\infty}
E_{k_0}.
\end{align*}
For $l=0,\dots,k-1$,
divide the hypercube 
\begin{equation}
\label{eq:quantization}
\left\{ y \in \mathbb{R}^{\sf p}:~| y  - \hat y((k_0+l)\tau_s)|_{\infty} 
\leq \left\|C_p e^{A_p l \tau_s} \right\|_{\infty}
E_{k_0} \right\}
\end{equation}
into $N^{\sf p}$ equal boxes and assign a number in $\{1,\dots, N^{\sf p}\}$
to each divided box by a certain one-to-one mapping.
The encoder sends to the decoder 
the number $Q_{k_0+l}$ of the divided box containing 
$y((k_0+l)\tau_s)$, and then
the decoder generates $q_{k_0+l}$ equal to the center of the box
with number $Q_{k_0+l}$.
If $y((k_0+l)\tau_s)$ lies on the boundary on several boxes, then
we can choose any one of them.

\subsection{Non-switched case}
The calculation of an upper bound $E_k$ on the estimation error
is dependent of whether 
a switch occurs in an interval with length $\eta_p \tau_s$.
Let us first study the case without switching in the interval
$(k_0\tau_s, (k_0+\eta_p) \tau_s]$, i.e., the case 
$\sigma(k_0\tau_s)  = \dots = \sigma((k_0+\eta_p) \tau_s) =: p$.

\subsubsection{Calculation of an error bound}
An upper bound $E_{k_0 + \eta_p}$ on $|e((k_0+\eta_p)\tau_s)|_{\infty}$ can be
obtained in the same way as in \cite{Liberzon2003}.
We therefore omit the details of the calculation here.

Define
\begin{equation}
\label{eq:gamma_p_def}
\theta_p = \frac{\|e^{A_p \eta_p \tau_s}W_p^{\dagger}\|_{\infty}
\cdot \max_{0\leq k \leq \eta_p-1}\left\|C_pe^{A_p k \tau_s}\right\|_{\infty}}{N}.
\end{equation}
From the result in \cite{Liberzon2003}, 
if we appropriately determine $\xi$ at $t=(k_0+\eta_p)\tau_s$
from the transmitted data $q_{k_0},\dots,q_{k_0+\eta_p-1}$,
then we obtain
$|e(t_0 + \eta_p \tau_s)|_{\infty} \leq E_{k_0+\eta_p}$, where
$E_{k_0+\eta_p}$ is defined by
\begin{equation}
\label{eq:e_noswitch_bound}
E_{k_0+\eta_p} = \theta_p E_{k_0}.
\end{equation}
Note that $\theta_p < 1$ for every $p \in \mathcal{P}$ 
by \eqref{eq:N_condition}.

\subsubsection{Decrease rate of multiple Lyapunov functions}
Here we construct a discrete-time Lyapunov function $V_p$ of mode $p$ 
with respect to $x(k\tau_s)$ and $E_k$.
The calculation below 
is similar to that in the state feedback case~\cite{Liberzon2014}, but
we sketch it for completeness.

For simplicity of notation, we write $V_p(k)$ instead of
$V_p(x(k\tau_s), E_k)$.
We obtain an upper bound of $V_p(k_0+\eta_p)$ using $V_p(k_0)$.

First we obtain $x((k_0+\eta_p)\tau_s)$ from $x(k_0\tau_s)$ and
$e(k_0\tau_s)$.
Since
\begin{align}
\label{eq:x_de_noswitch}
\dot x 
= A_p x + B_p K_p\xi 
= (A_p + B_pK_p) x - B_pK_p e
\end{align}
for $t \in  [k_0\tau_s, (k_0+\eta_p)\tau_s)$,
it follows from \eqref{eq:e_deq} that
\begin{align}
x((k_0+\eta_p)\tau_s) 
=
\bar A_p x(k_0\tau_s) + \bar B_p e(k_0\tau_s),
\label{x_t0+eta_tau}
\end{align}
where $\bar A_p$ and $\bar B_p$ are defined by
\begin{align*}
\bar A_p &= 
e^{(A_p+B_pK_p) \eta_p \tau_s} \\
\bar B_p &= 
\int^{\eta_p\tau_s}_{0}
e^{(A_p+B_pK_p)(\eta_p \tau_s - t)}
B_pK_p e^{A_p t} dt.
\end{align*}
Recall that $\bar A_p$ is Hurwitz by Assumption \ref{ass:quantization}.
To every positive definite matrix $Q_p$, 
there correponds a positive definite matrice $P_p$ such that
\begin{equation}
\label{eq:Lyapnov_equation}
\bar A_p ^{\top} P_p \bar A_p - P_p = -Q_p.
\end{equation}
Fix $\rho_p > 0$
for each $p \in \mathcal{P}$. Similarly to \cite{Liberzon2014},
define the Lyapunov function $V_p$ by
\begin{equation}
\label{eq:Lyapunov_def}
V_p(k) = 
V_p(x(k\tau_s),E_k)
= x(k\tau_s)^{\top}P_p x(k \tau_s) + \rho_p E_{k}^2.
\end{equation}

Pick $\kappa_p > 1$.
A simple computation gives
\begin{align}
&x((k_0+\eta_p)\tau_s)^{\top}P_px((k_0+\eta_p)\tau_s) - 
x(k_0\tau_s)^{\top}P_px(k_0\tau_s) \notag \\
&\quad \leq
- \lambda_{\min}(Q_p) |x(k_0\tau_s) |^2 
+2 \|\bar A_p^{\top} P_p \bar B_p\|\cdot  |x(k_0\tau_s) | \cdot |e(k_0\tau_s) |
+\| \bar B_p^{\top}P_p \bar B_p \| \cdot | e(k_0\tau_s)|^2 \notag \\
&\quad \leq
-\frac{1}{\kappa_p} \lambda_{\min}(Q_p) |x(k_0\tau_s) |^2 
- \frac{\kappa_p -1}{\kappa_p } \left(
\sqrt{\lambda_{\min}(Q_p)} \cdot |x(k_0\tau_s)|
-\frac{\kappa_p}{\kappa_p -1}
\frac{\|\bar A_p^{\top} P_p \bar B_p\|}{\sqrt{\lambda_{\min}(Q_p)}} | e(k_0\tau_s)|
\right)^2 \notag \\[-5pt]
&\qquad \qquad
+ 
\left( \frac{\kappa_p}{\kappa_p -1}
\frac{\|\bar A_p^{\top} P_p \bar B_p\|^2}{\lambda_{\min}(Q_p)} +
\| \bar B_p^{\top}P_p \bar B_p \|
\right)  | e(k_0\tau_s)|^2 \notag \\
&\quad \leq
-\alpha_p |x(k_0\tau_s) |^2 
 +\beta_p E_{k_0}^2,
\label{eq:x_Lyapnov}
\end{align}
where $\alpha_p$ and $\beta_p$ are defined by
\begin{align*}
\alpha_p &= \frac{1}{\kappa_p} \lambda_{\min}(Q_p) \\
\beta_p &= {\sf n}
\left( \frac{\kappa_p}{\kappa_p -1}
\frac{\|\bar A_p^{\top} P_p \bar B_p\|^2}{\lambda_{\min}(Q_p)} +
\| \bar B_p^{\top}P_p \bar B_p \|
\right).
\end{align*}
Combining \eqref{eq:e_noswitch_bound} with \eqref{eq:x_Lyapnov},
as in \cite[Lemma 1]{Liberzon2014}, we obtain
\begin{align*}
V_p(k_0+\eta_p)
&\leq
\left(
1 - \frac{\alpha_p}{\lambda_{\max}(P_p)}
\right) x(k_0\tau_s)^{\top}P_p x(k_0\tau_s) 
+ \left(\frac{\beta_p}{\rho_p} + \theta_p^2 \right) \rho_p E_{k_0}^2.
\end{align*}
Since $\theta_p<1$,
we can choose $\rho_p$ so that
\begin{equation}
\label{eq:rho_cond}
\rho_p > \frac{\beta_p}{1 - \theta_p^2}.
\end{equation}
Then defining
\begin{equation}
\label{eq:nu_p_def}
\nu_p := 
\max
\left\{
1 - \frac{\alpha_p}{\lambda_{\max}(P_p)},~~
\frac{\beta_p}{\rho_p} + \theta_p^2
\right\} < 1,
\end{equation}
we finally obtain
\begin{equation}
\label{eq:Lyapunov_nonswitched}
V_p(k_0+\eta_p) \leq \nu_p V_p(k_0).
\end{equation}
Note that $\nu_p$ depends on $\kappa_p > 1$ and $\rho_p$ with \eqref{eq:rho_cond}.
We can use these parameters to make
the encoding and control strategy less conservative, i.e., to allow 
smaller average dwell time.

\subsection{Switched case}
Next we study the case in which a switch
occurs in the interval $(k_0\tau_s, (k_0+\eta_p) \tau_s]$.
Suppose that $k \in \mathbb{N}$, with $k \leq \eta_{p}$, for which
the first switching time $T$ after $k_0\tau_s$ satisfies
$
T \in ((k_0+k-1)\tau_s, (k_0+k) \tau_s].
$
That is, 
$\sigma(k_0\tau_s) =  \dots = \sigma((k_0+k-1) \tau_s) =: p$ and 
$\sigma((k_0+k-1) \tau_s) \not= \sigma((k_0+k) \tau_s) =: q$.
In this case, the estimated state $\xi$ and 
the controller input $u$
in the interval
$[k_0\tau_s, (k_0+k) \tau_s]$ are given 
by \eqref{eq:xi_deq} and \eqref{eq:u_def},
respectively.
Note that 
the switching information is not transmitted instantly. However,
the controller can detect the switch at the next sampling time.
This is because the dwell-time condition in 
Assumption \ref{ass:switching_time} implies that at most one
switch occurs between sampling time. 

\subsubsection{Calculation of an error bound}
Our first objective here is to obtain 
an upper bound $E_{k_0+k}$ of $|e((k_0+k)\tau_s)|_{\infty}$
from the information $\xi(k_0\tau_s)$ and $E_{k_0}$
available to the quantizer.

\begin{lemma}
\label{lem:e_bound_switchedcase1}
{\em
Define $H_{p,q}$, $\bar \delta_{p,q}(k)$, and
$\bar \gamma_{p,q}'(k)$ by
\begin{align}
&H_{p,q} = (A_q - A_p) + (B_q - B_p)K_p \label{eq:Hpq} \\
&\bar \delta_{p,q}(k) = \max_{0\leq \tau \leq \tau_s}
\left\|
\int^{\tau_s}_{\tau}
e^{A_q(\tau_s - t)}H_{p,q} e^{(A_p+B_pK_p)(t+(k-1)\tau_s )} dt 
\right\|_{\infty} \notag\\
&\bar \gamma_{p,q}'(k) = \max_{0\leq \tau \leq \tau_s}
\left\| e^{A_q(\tau_s - \tau)} e^{A_p((k-1)\tau_s + \tau)}
 \right\|_{\infty}. \notag
\end{align}
Then we obtain the following upper bound of $|e(t_0+k \tau_s)|_{\infty}$:
\begin{align}
|e((k_0+k) \tau_s)|_{\infty}
\leq
\bar \delta_{p,q}(k) |\xi (k_0\tau_s)|_{\infty} 
+ \bar \gamma_{p,q}'(k) E_{k_0} =: E_{k_0+k}. \label{eq:Ek0_k_def}
\end{align}
}
\end{lemma}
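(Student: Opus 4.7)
The plan is to split the interval $[k_0\tau_s,(k_0+k)\tau_s]$ at the switching time $T$, propagate the error $e=x-\xi$ forward on each piece by solving an ODE, and finally take a worst case over the unknown location of $T$ inside the last sampling interval.

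First I would parametrize the switching instant by $\tau := T-(k_0+k-1)\tau_s \in (0,\tau_s]$, so that mode $p$ is active on $[k_0\tau_s,T)$ for a total duration $(k-1)\tau_s+\tau$, and mode $q$ is active on $[T,(k_0+k)\tau_s]$ for a duration $\tau_s-\tau$. Throughout the whole interval the estimator and controller keep using mode $p$, so $\dot\xi=(A_p+B_pK_p)\xi$ and $u=K_p\xi$. A direct computation of $\dot e=\dot x-\dot\xi$ then yields
\begin{align*}
\dot e &= A_p e \qquad \text{on }[k_0\tau_s,T),\\
\dot e &= A_q e + H_{p,q}\,\xi \qquad \text{on }[T,(k_0+k)\tau_s],
\end{align*}
where $H_{p,q}$ is as in \eqref{eq:Hpq}; the second line uses $\dot x=A_q x+B_q K_p\xi$ and simply collects the mismatch $(A_q-A_p)\xi+(B_q-B_p)K_p\xi$ into $H_{p,q}\xi$.

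Next I would integrate each piece. Up to time $T$,
\[
e(T)=e^{A_p((k-1)\tau_s+\tau)}e(k_0\tau_s),\qquad
\xi(T+s)=e^{(A_p+B_pK_p)((k-1)\tau_s+\tau+s)}\xi(k_0\tau_s).
\]
Variation of parameters for the $A_q$ piece then gives
\[
e((k_0+k)\tau_s)=e^{A_q(\tau_s-\tau)}e(T)+\int_0^{\tau_s-\tau} e^{A_q(\tau_s-\tau-s)}H_{p,q}\,\xi(T+s)\,ds.
\]
Substituting the expressions for $e(T)$ and $\xi(T+s)$ and changing variables $t=\tau+s$ in the integral converts the convolution integrand into exactly $e^{A_q(\tau_s-t)}H_{p,q}\,e^{(A_p+B_pK_p)((k-1)\tau_s+t)}$ with $t$ running over $[\tau,\tau_s]$, which is precisely the integrand that defines $\bar\delta_{p,q}(k)$.

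Finally, taking the $|\cdot|_\infty$ norm, using submultiplicativity of $\|\cdot\|_\infty$, and applying $|e(k_0\tau_s)|_\infty\le E_{k_0}$ yields
\[
|e((k_0+k)\tau_s)|_\infty \le \bigl\|e^{A_q(\tau_s-\tau)}e^{A_p((k-1)\tau_s+\tau)}\bigr\|_\infty E_{k_0} + \Bigl\|\int_\tau^{\tau_s}e^{A_q(\tau_s-t)}H_{p,q}e^{(A_p+B_pK_p)((k-1)\tau_s+t)}dt\Bigr\|_\infty |\xi(k_0\tau_s)|_\infty.
\]
Since the exact value of $\tau\in(0,\tau_s]$ is unavailable to the quantizer, I would conclude by bounding each coefficient by its maximum over $\tau\in[0,\tau_s]$, which reproduces $\bar\gamma_{p,q}'(k)$ and $\bar\delta_{p,q}(k)$ and gives \eqref{eq:Ek0_k_def}.

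The only real obstacle is bookkeeping: matching the integration variable, the offsets $(k-1)\tau_s$, and the exponent arguments in the definitions of $\bar\delta_{p,q}(k)$ and $\bar\gamma_{p,q}'(k)$ exactly, and being explicit that the max over $\tau$ is unavoidable because $T$ within the sampling interval is not known to the encoder.
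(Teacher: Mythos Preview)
Your proposal is correct and follows essentially the same route as the paper: split at the switching time $T$, use $\dot e=A_pe$ before $T$ and $\dot e=A_qe+H_{p,q}\xi$ after, solve by variation of parameters with $\xi(t)=e^{(A_p+B_pK_p)(t-k_0\tau_s)}\xi(k_0\tau_s)$, and then maximize over $\tau=T-(k_0+k-1)\tau_s\in(0,\tau_s]$. Your change of variables $t=\tau+s$ is exactly what produces the paper's integral $\int_\tau^{\tau_s}\cdots\,dt$, so the bookkeeping you flagged as the only obstacle is already handled correctly.
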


\begin{proof}
Since 
$e$ is determined by
\eqref{eq:e_deq} before the switching time $T$, 
it follows that
\[
e(T) = e^{A_p(T-k_0\tau_s)} e(k_0\tau_s).
\]

Let us consider the error behavior for $t > T$.
The mode of the plant changes from $p$ to $q$ after $T$, 
while that of the controller is still $p$. We therefore have
\begin{equation}
\label{eq:x_de_switch}
\dot x = A_qx + B_qK_p \xi,
\end{equation}
and it follows that $e$ satisfies
\begin{align}
\dot e 
= A_q e + H_{p,q} \xi \label{eq:e_eq_after_switch}
\end{align}
for $t > T$, 
where $H_{p,q}$ is defined by \eqref{eq:Hpq}.


As regards $\xi$, \eqref{eq:xi_deq} gives
\begin{equation}
\label{eq:xi_with_switch}
\xi (t) = e^{(A_p+B_pK_p)(t-k_0\tau_s)} \xi(k_0\tau_s)
\end{equation}
for $t \in [k_0\tau_s, (k_0+k)\tau_s]$.
Substituting this into \eqref{eq:e_eq_after_switch}, we obtain
\begin{align*}
e((k_0+k) \tau_s) 
=
e^{A_q(\tau_s - \tau)} e^{A_p((k-1)\tau_s + \tau)} e(k_0\tau_s)+
\int^{\tau_s}_{\tau}
e^{A_q(\tau_s - t)}H_{p,q} e^{(A_p+B_pK_p)(t+(k-1)\tau_s )} dt 
\cdot \xi(k_0\tau_s), 
\end{align*}
where $\tau = T - (k_0 + k-1)\tau_s$ and $0 < \tau \leq \tau_s$. 
Thus we obtain \eqref{eq:Ek0_k_def}.
\end{proof}

\begin{remark}

{\bf (1)}
The propose method discards the quantized measurements
$q_{k_0},\dots,q_{k-1}$. If we use these data, then
a better $E_k$ can be obtained, in particular, 
in the case when a switch occurs in the last sampling interval
$((k_0+\eta_p-1)\tau_s, (k_0+\eta_p) \tau_s]$.

Here we briefly explain how to obtain the error bound
in the switched case
by using the quantized measurements
$q_{k_0},\dots,q_{k-1}$. 
For simplicity, let us assume that
the switching time $T$ is in the last sampling interval, i.e., 
$T \in ((k_0+\eta_p-1)\tau_s, (k_0+\eta_p) \tau_s]$, and let
and $\sigma((k_0+\eta_p-1)\tau_s) = p$ and 
$\sigma((k_0+\eta_p)\tau_s) = q \not= p$.
In this case, we can construct the state estimate $\zeta_0$ for $x(k_0\tau_s)$
by using the measurements $q_{k_0},\dots, q_{k_0+\eta_p-1}$.
We assume that $|\zeta_0 - x(k_0\tau_s)|_{\infty} \leq E_{\zeta}$.

Similarly to $\xi$ in \eqref{eq:xi_deq}, we define the dynamics of $\zeta$ by
\begin{equation*}
\dot \zeta = A_p \zeta + B_pu,\qquad \zeta(k_0\tau_s) = \zeta_0.
\end{equation*} 
Define $e_{\zeta} = x - \zeta$.
Recalling that $u = K_p \xi$, we can write 
the dynamics of $e_{\zeta}$ after a switch as
\begin{equation*}
\dot e_{\zeta} = 
A_q e_{\zeta} + (A_q - A_p) \zeta + (B_q - B_p) K_p \xi,
\end{equation*}
and hence
\begin{equation*}
e_{\zeta}((k_0+\eta_p)\tau_s) = 
F_{1}(T) e_{\zeta}(k_0\tau_s) + 
F_{2}(T) \zeta_0 +
F_{3}(T) \xi(k_0\tau_s)
\end{equation*}
for some continuous functions $F_1$, $F_2$, and $F_3$.
Therefore if we define the new state estimate $\xi((k_0+\eta_p)\tau_s)$ by
\begin{equation*}
\xi((k_0+\eta_p)\tau_s) := \zeta((k_0+\eta_p)\tau_s),
\end{equation*}
then the error bound $E_{k_0+\eta_p}$ is
given by
\begin{equation*}
E_{k_0+\eta_p} = 
\max_{T \in I_{0}}
\|F_1(T)\|_{\infty} \cdot E_{\zeta} + 
\max_{T \in I_{0}}
\|F_2(T)\|_{\infty} \cdot |\zeta_0|_{\infty} +
\max_{T \in I_{0}}
\|F_3(T)\|_{\infty} \cdot |\xi(k_0\tau_s)|_{\infty},
\end{equation*}
where $I_{0} := [(k_0+\eta_p-1)\tau_s, (k_0+\eta_p) \tau_s]$.


{\bf (2)}
For simplicity, we use \eqref{eq:xi_with_switch} 
for the estimated state at $t = (k_0+k)\tau_s$.
However, this estimate makes the corresponding bound $E_{k_0+k}$
be larger if the switch occurs just after the sampling time.
To avoid this conservatism, 
two auxiliary time variables can be used as in \cite[Section 4.2]{Liberzon2014}.

\end{remark}
\subsubsection{Increase rate of multiple Lyapunov functions}
Let us next find an upper bound of $V_q(k_0+k)$ described by $V_p(k_0)$.
To this end, we need upper bounds on $|x ((k_0+k)\tau_s)|$
and $E_{k_0+k}$ by using $|x (k_0\tau_s)|$ and $E_{k_0}$.

\begin{lemma}
\label{lem:Ek0+k_bound}
{\em
Define $\bar \delta_{p,q}$ and $\bar \gamma_{p,q}'$ as in 
Lemma \ref{lem:e_bound_switchedcase1} and also
$\bar \gamma_{p,q}$ by 
$\bar \gamma_{p,q}(k) = \bar \delta_{p,q}(k) + \bar \gamma_{p,q}'(k)$.
Then we have
\begin{align}
\label{eq:e_bound_mismatch}
E_{k_0+k} 
\leq \bar \delta_{p,q}(k)|x (k_0\tau_s)| + 
\bar \gamma_{p,q}(k) E_{k_0}.
\end{align}
}
\end{lemma}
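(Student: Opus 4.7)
The plan is to start directly from the error-bound formula already established in Lemma~\ref{lem:e_bound_switchedcase1}, namely
\[
E_{k_0+k} = \bar\delta_{p,q}(k)\,|\xi(k_0\tau_s)|_\infty + \bar\gamma_{p,q}'(k)\,E_{k_0},
\]
and simply replace $|\xi(k_0\tau_s)|_\infty$ by an upper bound involving $|x(k_0\tau_s)|$ and $E_{k_0}$. The only ingredient beyond Lemma~\ref{lem:e_bound_switchedcase1} is the relationship between the estimated state and the true state, namely $\xi = x - e$, together with the standing hypothesis $|e(k_0\tau_s)|_\infty \leq E_{k_0}$ that is in force at the beginning of the zooming-in stage.

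The first step is to apply the triangle inequality to $\xi(k_0\tau_s) = x(k_0\tau_s) - e(k_0\tau_s)$ in the $\infty$-norm, giving
\[
|\xi(k_0\tau_s)|_\infty \leq |x(k_0\tau_s)|_\infty + |e(k_0\tau_s)|_\infty \leq |x(k_0\tau_s)|_\infty + E_{k_0}.
\]
The second step is to pass from the $\infty$-norm to the Euclidean norm of $x(k_0\tau_s)$, using the elementary inequality $|v|_\infty \leq |v|$ on $\mathbb{R}^{\sf n}$. Substituting into the formula from Lemma~\ref{lem:e_bound_switchedcase1} yields
\[
E_{k_0+k} \leq \bar\delta_{p,q}(k)\bigl(|x(k_0\tau_s)| + E_{k_0}\bigr) + \bar\gamma_{p,q}'(k)\,E_{k_0},
\]
and collecting the coefficient of $E_{k_0}$ gives exactly $\bar\delta_{p,q}(k) + \bar\gamma_{p,q}'(k) = \bar\gamma_{p,q}(k)$, which is the claimed inequality.

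There is essentially no obstacle here: this is a bookkeeping lemma whose purpose is to re-express the error bound of Lemma~\ref{lem:e_bound_switchedcase1} in a form where the right-hand side depends on $x(k_0\tau_s)$ rather than on $\xi(k_0\tau_s)$, so that it can be fed into the Lyapunov computation for $V_q(k_0+k)$ carried out in the remainder of Section~IV.C.2. The one point worth flagging is the mild norm mismatch between the $\infty$-norm appearing in the encoding analysis and the Euclidean norm used in the Lyapunov function; this is resolved cleanly by the monotonicity $|\cdot|_\infty \leq |\cdot|$ and does not require any assumption on the sampling period or the switching signal beyond what is already in effect.
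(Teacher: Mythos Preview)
Your proof is correct and follows exactly the same approach as the paper: the paper's proof consists of the single displayed chain $|\xi(k_0\tau_s)|_{\infty} \leq |x(k_0\tau_s)|_{\infty} + |e(k_0\tau_s)|_{\infty} \leq |x(k_0\tau_s)| + E_{k_0}$ and then cites the definition \eqref{eq:Ek0_k_def}. Your write-up is slightly more explicit about the substitution and the collection of the $E_{k_0}$ coefficient, but the argument is identical.
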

\begin{proof}
This follows from the definition \eqref{eq:Ek0_k_def} of $E_{k_0+k}$ and
\[
|\xi(k_0\tau_s)|_{\infty} \leq  |x(k_0\tau_s)|_{\infty} + |e(k_0\tau_s)|_{\infty}
\leq |x(k_0\tau_s)| + E_{k_0}.
\]
\end{proof}

\begin{remark}
Note that $E_{k_0+k}$ must be determined from the available data 
$\xi(k_0\tau_s)$ and $E_{k_0}$. In contrast, 
since the variables of the Lyapunov function $V_q$ are 
$x(k_0\tau_s)$ and $E_{k_0}$,
we need an upper bound on $E_{k_0+k}$ described
by $x(k_0\tau_s)$ and $E_{k_0}$.
If we use $\xi$ as a variable of the Lyapunov functions
as in \cite{Liberzon2014},
then the conservatism in Lemma \ref{lem:Ek0+k_bound}
can be avoided. Instead of that, however, 
\eqref{eq:x_Lyapnov} becomes conservative.
\end{remark}

Now we obtain an upper bound of $|x((k_0+k)\tau_s)|$.
\begin{lemma}
{\em
Define $\bar \alpha_{p,q}(k)$ and $\bar \beta_{p,q}(k)$ by
\begin{align*}
\bar \alpha_{p,q}(k) &=
\max_{0 \leq \tau \leq \tau_s}
\left\|
e^{A_q(\tau_s - \tau)}e^{(A_p+B_pK_p)\tau(k)}
+ 
\int^{\tau_s }_{\tau}
e^{A_q(\tau_s  - t)}
B_qK_p e^{(A_p+B_pK_p) (t + (k-1)\tau_s)} dt
\right\|.  \\
\bar \beta_{p,q} (k)
&=
\sqrt{\sf n} 
\max_{0 \leq \tau \leq \tau_s}
\biggr\|
e^{A_q(\tau_s - \tau)}
\int^{\tau(k)}_{0}
e^{(A_p+B_pK_p)(\tau(k) - t)}
B_pK_p e^{A_p t} dt \notag \\
&\qquad \qquad \qquad \qquad \qquad -
\int^{\tau_s }_{\tau}
e^{A_q(\tau_s  - t)}
B_qK_p e^{(A_p+B_pK_p) (t + (k-1)\tau_s)} dt
\biggr\|.
\end{align*}
Then we derive
\begin{align}
\label{eq:x_bound_mismatch}
|x((k_0+k)\tau_s)| 
\leq \bar \alpha_{p,q}(k) |x(k_0\tau_s)| +
\bar \beta_{p,q}(k) E_{k_0}.
\end{align}
}
\end{lemma}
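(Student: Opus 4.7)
The plan is to compute $x((k_0+k)\tau_s)$ explicitly by the variation-of-constants formula, splitting the interval at the switching time $T\in((k_0+k-1)\tau_s,(k_0+k)\tau_s]$, and then to collect coefficients of $x(k_0\tau_s)$ and $e(k_0\tau_s)$. Let $\tau:=T-(k_0+k-1)\tau_s\in(0,\tau_s]$, so that the length of the pre-switch piece is $\tau(k):=(k-1)\tau_s+\tau$ and the length of the post-switch piece is $\tau_s-\tau$. On $[k_0\tau_s,T]$ the plant obeys $\dot x = A_px+B_pK_p\xi = (A_p+B_pK_p)x-B_pK_pe$, and Lemma~\ref{lem:e_bound_switchedcase1} already gives $e(t)=e^{A_p(t-k_0\tau_s)}e(k_0\tau_s)$ there; variation of constants therefore expresses $x(T)$ linearly in $x(k_0\tau_s)$ and $e(k_0\tau_s)$.

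On the post-switch piece I would use $\dot x = A_qx+B_qK_p\xi$ with $\xi(s)=e^{(A_p+B_pK_p)(s-k_0\tau_s)}\xi(k_0\tau_s)$, and write
\[
x((k_0+k)\tau_s)=e^{A_q(\tau_s-\tau)}x(T)+\int_T^{(k_0+k)\tau_s}e^{A_q((k_0+k)\tau_s-s)}B_qK_p\xi(s)\,ds.
\]
Substituting the expression for $x(T)$, shifting the integration variable by $t\mapsto t+(k-1)\tau_s$ so that the limits become $[\tau,\tau_s]$, and using $\xi(k_0\tau_s)=x(k_0\tau_s)-e(k_0\tau_s)$ to eliminate $\xi(k_0\tau_s)$, I obtain
\[
x((k_0+k)\tau_s)=M(\tau)\,x(k_0\tau_s)+R(\tau)\,e(k_0\tau_s),
\]
where $M(\tau)$ is precisely the matrix inside the norm in the definition of $\bar\alpha_{p,q}(k)$, and $R(\tau)$ is the combination that appears inside the norm defining $\bar\beta_{p,q}(k)$ (the sign ambiguity in the integrand is irrelevant once one takes the norm).

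Taking Euclidean norms, using $\|Mv\|\le\|M\|\cdot|v|$, and passing from the maximum norm to the Euclidean norm via $|e(k_0\tau_s)|\le\sqrt{\sf n}\,|e(k_0\tau_s)|_\infty\le\sqrt{\sf n}\,E_{k_0}$, gives the factor $\sqrt{\sf n}$ appearing in $\bar\beta_{p,q}(k)$. Maximizing $\|M(\tau)\|$ and $\|R(\tau)\|$ over $\tau\in[0,\tau_s]$ then yields the bound \eqref{eq:x_bound_mismatch}. The main obstacle is purely bookkeeping: tracking the two concatenated fundamental matrices $e^{A_q(\tau_s-\tau)}$ and $e^{(A_p+B_pK_p)\tau(k)}$, correctly shifting each integral's variable of integration so the limits become $[\tau,\tau_s]$ with $(k-1)\tau_s$ absorbed into the exponent, and verifying that after substituting $\xi(k_0\tau_s)=x(k_0\tau_s)-e(k_0\tau_s)$ the coefficient of $e(k_0\tau_s)$ matches the expression stipulated in $\bar\beta_{p,q}(k)$ up to a sign.
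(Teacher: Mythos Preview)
Your proposal is correct and follows essentially the same argument as the paper: compute $x(T)$ from \eqref{eq:x_de_noswitch} and \eqref{eq:e_deq} via variation of constants, then propagate through the post-switch dynamics \eqref{eq:x_de_switch} using \eqref{eq:xi_with_switch} and the substitution $\xi(k_0\tau_s)=x(k_0\tau_s)-e(k_0\tau_s)$, finally taking norms with $|e(k_0\tau_s)|\le\sqrt{\sf n}\,E_{k_0}$ and maximizing over $\tau$. Your change of variable and your remark about the irrelevant sign match the paper's equations \eqref{eq:xT}--\eqref{eq:xk0_k_tau_s} exactly.
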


\begin{proof}
Recall that $x$ and $e$ satisfy \eqref{eq:x_de_noswitch} 
and \eqref{eq:e_deq}
before the switching time $T$.
Defining 
$
\tau(k) = T - k_0\tau_s,
$
we have
\begin{align}
\label{eq:xT}
x(T)
=
e^{(A_p+B_pK_p)\tau(k)} x(k_0\tau_s) +\int^{\tau(k)}_{0}
e^{(A_p+B_pK_p)(\tau(k) - t)}
B_pK_p e^{A_p t} dt \cdot
e(k_0\tau_s). 
\end{align}
On the other hand, 
since 
$x$ satisfies \eqref{eq:x_de_switch}
after the switching time $T$, it follows from \eqref{eq:xi_with_switch} that
\begin{align}
x((&k_0+k)\tau_s) 
=
e^{A_q(\tau_s - \tau)} x(T) +
\int^{\tau_s }_{\tau}
e^{A_q(\tau_s  - t)}
B_qK_p e^{(A_p+B_pK_p) (t +(k-1)\tau_s)} dt  \cdot
(x(k_0\tau_s) - e(k_0\tau_s)), \label{eq:xk0_k_tau_s}
\end{align}
where $\tau = T - (k_0 + k-1)\tau_s$ and $0 < \tau \leq \tau_s$.
Substituting \eqref{eq:xT} into \eqref{eq:xk0_k_tau_s},
we derive the desired result \eqref{eq:x_bound_mismatch}.
\end{proof}

Similarly to \cite[Lemma 2]{Liberzon2014}, 
\eqref{eq:e_bound_mismatch} and \eqref{eq:x_bound_mismatch} 
show that
\begin{align}
V_q(k_0+k) 
\leq
\frac{
2(\lambda_{\max}(P_q) \bar \alpha_{p,q}^2 +
\rho_q \bar \delta_{p,q}^2 )}{\lambda_{\min}(P_p)}
x(k_0\tau_s)^{\top}P_q x(k_0\tau_s) +
\frac{2(\lambda_{\max}(P_q)\bar \beta_{p,q}^2
+ \rho_q  \bar \gamma_{p,q}^2) }{\rho_p}
\rho_p E_{k_0}^2. \label{eq:Vq_K0+k}
\end{align}
Thus if the switching time $T \in ((k_0+k-1)\tau_s,(k_0+k)\tau_s ]$,
then the bound \eqref{eq:Vq_K0+k} gives
\begin{equation}
\label{eq:Lyapunov_switched}
V_q(k_0+k) \leq \bar \nu_{p,q}(k) V_p(k_0),
\end{equation}
where $\bar \nu_{p,q}(k)$ is defined by
\begin{align*}
\bar \nu_{p,q}(k) &= 
\max
\biggl\{
\frac{
2(\lambda_{\max}(P_q) \bar \alpha_{p,q}(k)^2 +
\rho_q \bar \delta_{p,q}(k)^2) }{\lambda_{\min}(P_p)},~~
\frac{2(\lambda_{\max}(P_q)\bar \beta_{p,q}(k)^2
+ \rho_q  \bar \gamma_{p,q}(k)^2) }{\rho_p}
\biggr\}.
\end{align*}

\subsection{Convergence to the origin}
Finally, we combine 
the average dwell-time property with
the bounds \eqref{eq:Lyapunov_nonswitched} 
and \eqref{eq:Lyapunov_switched} on the Lyapunov functions.

\begin{lemma}
{\em
Define $\nu$ and $\bar \nu$ by
\begin{equation}
\label{eq:nu_bnu_def}
\nu = \max_{p \in \mathcal{P}} \nu_p,\quad
\bar \nu = \max_{p\not=q} \max_{1 \leq k \leq \eta_p}\bar \nu_{p,q}(k).
\end{equation}
If the average dwell time $\tau_a$ satisfies
\begin{equation}
\label{eq:ADT_final_condition}
\tau_a > \left( 1+
\frac{ \log \bar \nu}{\log(1/\nu)} 
\right)\eta\tau_s,
\end{equation}
where $\eta$ is defined by \eqref{eq:eta_def},
then the state converges to the origin, that is, \eqref{eq:convergence_x} holds.
}
\end{lemma}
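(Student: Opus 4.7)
The plan is to telescope the per-block Lyapunov inequalities \eqref{eq:Lyapunov_nonswitched} and \eqref{eq:Lyapunov_switched} along a carefully chosen sequence of update times and then apply the average dwell-time bound \eqref{eq:ADT_cond} to control the count of ``switched'' blocks. Starting from the final time of the zooming-out stage, which we call $t_0 = k_0 \tau_s$, I would recursively define update times $t_0 < t_1 < t_2 < \cdots$ as follows: given $t_j = k_j \tau_s$ with $p_j := \sigma(t_j)$, if no switch occurs in $(t_j, t_j + \eta_{p_j}\tau_s]$ set $t_{j+1} = t_j + \eta_{p_j}\tau_s$ (non-switched block); otherwise, set $t_{j+1} = t_j + k \tau_s$, where $k$ is the index from Section IV.C locating the first switch (switched block). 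By the dwell-time part of Assumption \ref{ass:switching_time}, each block contains at most one switch, and every block has length in $[\tau_s,\eta\tau_s]$.

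Along this sequence, \eqref{eq:Lyapunov_nonswitched} gives $V_{p_{j+1}}(k_{j+1}) \le \nu\,V_{p_j}(k_j)$ for non-switched blocks and \eqref{eq:Lyapunov_switched} gives $V_{p_{j+1}}(k_{j+1}) \le \bar{\nu}\,V_{p_j}(k_j)$ for switched blocks, where $\nu<1$ and $\bar{\nu}$ are defined in \eqref{eq:nu_bnu_def}. Let $M(T)$ be the number of blocks completed by time $T$ and $n_s(T)$ the number of switched ones. Since each block has length $\le \eta\tau_s$, one has $M(T) \ge (T-t_0)/(\eta\tau_s) - 1$; since each switched block is triggered by a distinct switch, $n_s(T) \le N_\sigma(T,0) \le N_0 + T/\tau_a$. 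Telescoping yields
\begin{equation*}
V_{p_{M(T)}}(k_{M(T)}) \;\le\; \nu^{M(T)-n_s(T)}\,\bar{\nu}^{\,n_s(T)}\,V_{p_0}(k_0).
\end{equation*}

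Taking logarithms and using $\log \nu < 0$, the exponential growth rate in $T$ of the right-hand side equals $\log\nu/(\eta\tau_s) + \log(\bar{\nu}/\nu)/\tau_a$, which is strictly negative if and only if $\tau_a\,\log(1/\nu) > \eta\tau_s\,\log(\bar{\nu}/\nu) = \eta\tau_s\,[\log\bar{\nu}+\log(1/\nu)]$, i.e.\ precisely under \eqref{eq:ADT_final_condition}. Consequently $V_{p_j}(k_j) \to 0$, and since $V_p(k) \ge \lambda_{\min}(P_p)\,|x(k\tau_s)|^2$ by \eqref{eq:Lyapunov_def}, we obtain $x(k_j\tau_s)\to 0$ along the update grid. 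A routine bound on each sampling interval (using the uniform boundedness of $\|e^{A_p\tau}\|$ for $\tau\in[0,\eta\tau_s]$ together with $u=K_p\xi$ and $\xi$ driven by $\bar A_p$, $\bar B_p$) upgrades this to $x(t)\to 0$, establishing \eqref{eq:convergence_x}.

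The main obstacle is the bookkeeping around block boundaries: one must check that the Lyapunov index $p_j$ transitions consistently so that the telescoping is legitimate (in particular, across a switched block the post-block index is $q$, matching the new active mode, and the resulting $V_q$ is the same function used to start the next block), and that counting the shorter switched blocks correctly still yields the sharp threshold \eqref{eq:ADT_final_condition} rather than a weaker one. The rest is the standard average dwell-time argument of \cite{Hespanha1999CDC}.
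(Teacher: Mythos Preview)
Your proposal is correct and follows essentially the same approach as the paper: partition $[k_0\tau_s,\infty)$ into update blocks, telescope \eqref{eq:Lyapunov_nonswitched} and \eqref{eq:Lyapunov_switched} to obtain a bound of the form $\nu^{(\text{non-switched count})}\bar\nu^{(\text{switched count})}$, lower-bound the total block count via the maximal block length $\eta\tau_s$, upper-bound the switched count via \eqref{eq:ADT_cond}, and verify that \eqref{eq:ADT_final_condition} makes the resulting rate negative. The only cosmetic difference is that the paper indexes by switching times $T_1,\dots,T_r$ and defines the auxiliary counts $\psi_i$ and endpoints $\ell_i$ (with a separate factor $\hat\nu$ for the tail block), whereas you index directly by the update times $t_j$; the two bookkeepings coincide, and your inequality $M(T)\ge (T-t_0)/(\eta\tau_s)-1$ is exactly the paper's $\psi\ge (M-k_0+1)/\eta-(r+1)$ in different notation.
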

\begin{proof}
If we have no switches, then convergence to the origin directly follows from
stabilizability of each mode. 
Hence we assume that switches occur.
Fix an integer $M > k_0$. Let the switching times in the interval $(k_0\tau_s, M\tau_s]$
be $T_1,\dots,T_r$. Suppose that $T_i \in ((k_i-1)\tau_s ,k_i\tau_s]$.
By Assumption \ref{ass:switching_time}, $k_{i-1} \leq  k_{i} - 1$
for $i=1,\dots,r$.

Define $\psi_i$ and $\ell_i$ by
\begin{equation*}
\psi_i = \left\lfloor \frac{k_i - k_{i-1} - 1}{\eta_{\sigma(k_{i-1}\tau_s)}} \right\rfloor, \quad
\ell_i = k_{i-1} + \psi_i \eta_{\sigma(k_{i-1}\tau_s)}
\end{equation*}
for $i= 1,\dots,r$. 
Then $\sigma(k_i\tau_s) \not= \sigma(\ell_i \tau_s)$ and
$\sigma(\ell_i \tau_s) = \sigma(k_{i-1}\tau_s)$. 
Moreover, since
\begin{equation}
\label{eq:ceiling_func_property}
\frac{k-n+1}{n} \leq  \left\lfloor \frac{k}{n}\right\rfloor
\leq \frac{k}{n}
\end{equation}
for $k,n \in \mathbb{N}$, it follows that 
$1 \leq k_i - \ell_i \leq \eta_{\sigma(k_{i-1}\tau_s)}$.
This means that we have $\psi_i$ intervals with length 
$\eta_{\sigma(k_{i-1}\tau_s)} \tau_s$ in which no switch occurs
and that the switched case in Section IV. C starts at $t = \ell_i \tau_s$.
We therefore obtain
\begin{equation}
\label{eq:V_switching}
V_{\sigma(k_i\tau_s)}(k_i) 
\leq \bar \nu V_{\sigma(\ell_i\tau_s)}(\ell_i )
\leq \bar \nu \nu^{\psi_i}V_{\sigma(k_{i-1}\tau_s)}(k_{i-1})
\end{equation}
for $i=1,\dots,r$.

Now we investigate the Lyapunov functions after the last switching time $T_r$.
As before, define
\begin{equation*}
\psi_{r+1}= \left\lfloor \frac{M- k_{r} }{\eta_{\sigma(k_{r}\tau_s)}} \right\rfloor, \quad
\ell_{r+1} = k_{r} + \psi_{r+1} \eta_{\sigma(k_{r}\tau_s)}.
\end{equation*}
A discussion similar to the above shows that
\begin{align}
\label{V_final}
V_{\sigma (M\tau_s)}(M) 
\leq \hspace{-1.5pt} 
\hat{\nu} \nu^{\psi_{r+1}}V_{\sigma (k_r \tau_s)}(k_r)
~\text{for some $\hat{\nu}>0$.}
\end{align}


Let us combine
the Lyapunov functions before and after the
last switching time $T_r$.
Define $\psi$ by
\begin{equation}
\label{eq:c_def}
\psi = \sum_{i=1}^{r+1} \psi_i.
\end{equation}
Then \eqref{eq:V_switching} and \eqref{V_final} shows that
\begin{equation}
\label{eq:V_N_bound}
V_{\sigma(M\tau_s)}(M) \leq \hat \nu \bar \nu^{r} \nu^{\psi}V_{\sigma(k_0\tau_s)}(k_0).
\end{equation}
We see from \eqref{eq:ceiling_func_property} that
$\psi$ in \eqref{eq:c_def} satisfies
\begin{align}
\psi
&\geq 
\frac{M - k_0 + 1}{\eta} - (r+1), \label{eq:psi_inequality}
\end{align}
where $\eta$ is defined by \eqref{eq:eta_def}.
Substituting \eqref{eq:psi_inequality} into \eqref{eq:V_N_bound},
we obtain
\begin{equation}
V_{\sigma(M\tau_s)}(M) \leq 
\hat \nu \nu^{1/\eta - 1}  \cdot
\bar \nu^r \nu^{(M-k_0)/\eta - r}
V_{\sigma(k_0\tau_s)}(k_0).
\end{equation}

Suppose that $r = N_{\sigma}(M\tau_s, k_0\tau_s)$ satisfies 
the average dwell-time condition \eqref{eq:ADT_cond}.
Then
\begin{align*}
\bar \nu^r \nu^{(M-k_0)/\eta - r}
\leq
\bar \nu^{N_0}\nu^{-N_0}
\cdot
\left(
\bar \nu^{\tau_s / \tau_a}
\nu^{1/\eta - \tau_s/ \tau_a}
\right)^{M- k_0}.
\end{align*}
Thus if $\tau_a$ satisfies
$\bar \nu^{\tau_s / \tau_a}
\nu^{1/\eta - \tau_s/ \tau_a} < 1$, i.e., \eqref{eq:ADT_final_condition} 
holds,
then we have
$
\lim_{M \to \infty} V_{\sigma(M\tau_s)}(M) = 0.
$

From the convergence of $V_{\sigma}$, 
we easily obtain the desired result \eqref{eq:convergence_x}.
The definition \eqref{eq:Lyapunov_def} of $V_{\sigma(M\tau_s)}$ shows that
$| x(M\tau_s)|$ and $|e(M\tau_s)|_{\infty}$ are bounded by the constant multiplication
of $\sqrt{V_{\sigma(M\tau_s)}(M)}$, and so is $|\xi(M\tau_s)|$.
Hence 
\begin{equation*}
\lim_{M\to \infty}x(M\tau_s) = \lim_{M\to \infty}\xi(M\tau_s) =
0.
\end{equation*}

Since the behavior of $x$ between sampling times is given by
\eqref{eq:x_de_noswitch} and \eqref{eq:x_de_switch},
it follows that
\begin{equation*}
|x(M\tau_s + \tau)| \leq L_1 |x(M\tau_s)|
+ L_2 |\xi(M\tau_s)|
\qquad (0 <  \tau < \tau_s)
\end{equation*}
for some $L_1,L_2 \geq 0$.
Thus the state converges to the origin not only at sampling times
but also in sampling intervals.
\end{proof}

\begin{remark}
{\bf(1)}
To avoid a trivial result, we assume that $\bar \nu \geq 1$. 
Then 
\eqref{eq:ADT_final_condition} implies
\eqref{eq:DTC_zo}, which is the assumption on $\tau_a$ 
at the ``zooming-out'' stage.

\noindent
{\bf(2)}
From \eqref{eq:ADT_final_condition}, we see the relationship between
switching and data rate.
If we increase $N$ in \eqref{eq:N_condition}, 
then $\gamma_p$ defined by \eqref{eq:gamma_p_def}
decreases and hence so do $\nu_p$ in \eqref{eq:nu_p_def}
and $\nu$ in \eqref{eq:nu_bnu_def}.
This leads to a decrease in $\tau_a$. 
%
%
%

\noindent
{\bf (3)}
Piecewise linear Lyapunov functions are also applicable
if an induced norm of $e^{(A_p + B_pK_p)\eta_p \tau_s}$ is less than one
for every $p \in \mathcal{P}$.
For example,
$\|e^{(A_p + B_pK_p)\eta_p \tau_s}\|_{\infty} < 1$ allows us to construct
$V_p = |x|_{\infty} + \rho_pE$ and $V_p = |\xi|_{\infty} + \rho_pE$.
The advantage is that the computation of their upper bounds
are simpler than in
the quadratic case. Such Lyapunov functions may provide less conservative results.
\end{remark}

\section{Lyapnov Stability}
The point here is to find an upper bound on the
finish time of the ``zooming-out'' stage and
an upper bound on the time after which the state with non-zero control input remains in 
$\varepsilon$-neighborhood of the origin at the ``zooming-in'' stage.
Such bounds are dependent on $\tau_a$ and $N_0$ in \eqref{eq:ADT_cond}, 
but not on a switching signal itself.
The former follows from Lemma \ref{lem:ADT_upperbound}  and
the latter proceeds along the
same lines as in Sec. 5.5 of \cite{Liberzon2014}.

Let us first investigate the final time of the ``zooming-out'' stage.

Assume that the average dwell time condition \eqref{eq:ADT_final_condition}
holds, and
let $m$ be an interger satisfying \eqref{eq:N_ADTcond} with 
$\eta \tau_s$ in place of $\tau_0$.
Lemma \ref{lem:ADT_upperbound} with $t_0 = 0$ shows that
for such $m$, there exists 
an integer $n_0 \in[0,(m-1)\eta]$ such that
\[
\sigma(t) = \sigma(n_0\tau_s) =: p
\]
for $t \in [n_0 \tau_s, (n_0+\eta - 1)\tau_s]$.
Moreover, if $\delta$ satisfies
\[
C_{\max} e^{\max_{p \in \mathcal{P}} \|A_p\|_{\infty} 
(m\eta - 1)\tau_s} \delta < \mu_0,
\]
where we define $C_{\max} = \max_{p \in \mathcal{P}} \|C_p\|_{\infty}$,
then $|y(n\tau_s)|_{\infty} \leq \mu_0 \leq \mu_n$, 
and hence $Q_n(y) = 0$ for all $n=0,1,\dots,(m-1)\eta$.
For $E_{n_0}$ defined by \eqref{eq:En0_def},
\eqref{eq:x_j0} shows that
\begin{align*}
|x(n_0\tau_s)|_{\infty} \leq E_{n_0} \leq 
\max_{p \in \mathcal{P}} 
\|W^{\dagger}_p\|_{\infty} \cdot \mu_{m\eta - 1}
=:\bar{E}_0.
\end{align*}
This leads to
\begin{align*}
|x((n_0+\eta_p)\tau_s)|_{\infty} &\leq
e^{\max_{p \in \mathcal{P}} \|A_p\|_{\infty} \tau_s} 
\left\|e^{A_p(\eta_p-1)\tau_s} \right\|_{\infty} 
E_{n_0} (=: E_{n_0+\eta_p}) \\
&\leq 
e^{\max_{p \in \mathcal{P}} \|A_p\|_{\infty} \tau_s} \cdot
\max_{p \in \mathcal{P}} 
\left\|e^{A_p(\eta_p-1)\tau_s}
\right\|_{\infty} \cdot \bar{E}_0
=: \bar E.
\end{align*}
for all $p \in \mathcal{P}$ and $n_0 \in [0,(m-1)\eta]$.
This bound $\bar E$ is independent on a
switching signal itself
and satisfies $E_{n_0+\eta_p} \leq \bar E$.
Also the final time $(n_0+\eta_p)\tau_s$ of the ``zooming-out'' stage
is smaller than $m\eta$, which does depend on $N_0$ and $\tau_a$, but 
not on a switching signal itself.

Next we study the time after which the state with non-zero control input 
remains in $\varepsilon$-neighborhood of the origin at the ``zooming-in'' stage.

The discussion above shows
that the initial time $t_0 := k_0 \tau_s := 
(n_0 + \eta_p)\tau_s$ of the ``zooming-in'' stage
satisfies $k_0 \leq m$ and that
$|x(k_0\tau_s)|_{\infty} \leq E_{k_0} \leq \bar E$.
Define $\Lambda_{\max}$ by
\[
\Lambda_{\max} = \max_{p\in \mathcal{P}} (
{\sf n} \lambda_{\max}(P_p) + \rho_p).
\]
Since $e(k_0\tau_s) = x(k_0\tau_s)$, it follows that
the Lyapunov function in \eqref{eq:Lyapunov_def} satisfies
$
V_{\sigma(k_0)\tau_s} \leq \Lambda_{\max} \bar E^2.
$
Thus we see from \eqref{eq:V_N_bound} that, for all $M \geq k_0$
\begin{equation}
\label{eq:V_bound_forLyapnouvS}
V_{\sigma(M\tau_s )}(M) \leq \hat \nu \bar \nu^{N_{\sigma}(M\tau_s,k_0\tau_s)} 
\nu^{\psi}\Lambda_{\max} \bar E^2.
\end{equation}
In conjunction with \eqref{eq:ADT_final_condition},
this shows that, for every $\varepsilon > 0$, there exists $M_0 \geq 0$ such that
$V_{\sigma(M\tau_s)} (M)< \varepsilon$ for $M \geq M_0$.
Hence for every $\varepsilon > 0$, there also exists $T_0 \geq 0$ such that
\begin{equation}
\label{eq:x_bound_Lyapunov_afterT_0}
|x(t)|_{\infty} < \varepsilon \qquad  (t \geq T_0).
\end{equation}
Notice that \eqref{eq:V_bound_forLyapnouvS} implies that 
$M_0$ depends on $N_0$, $\tau_a$, and $\varepsilon$ 
but not on a switching signal itself, and so does $T_0$.

If we obtain
\begin{equation}
\label{eq:x_bound_Lyapunov_beforeT_0}
|x(t)|_{\infty} < \varepsilon \qquad  (t \leq T_0),
\end{equation}
then
combining \eqref{eq:x_bound_Lyapunov_afterT_0} and
\eqref{eq:x_bound_Lyapunov_beforeT_0}
completes the proof for Lyapunov stability.

We see from
\eqref{eq:En0+etap},
\eqref{eq:quantization}, and \eqref{eq:e_noswitch_bound} that
if $\delta$ satisfies
\begin{align*}
C_{\max} e^{\max_{p \in \mathcal{P}} \|A_p\|_{\infty} T_0} \delta &\leq 
\frac{1}{\sf p}
\min_{p \in \mathcal{P}} \min_{0\leq k \leq \eta_p-1} 
\| C_p e^{A_pk\tau_s}\|_{\infty} \cdot
(\min_{p \in \mathcal{P}} \theta_p)^{\lfloor T_0 / \min_{p \in \mathcal{P}}
\eta_p \rfloor} \\
&\qquad
\times \min_{p \in \mathcal{P}}\left( 
\left\|e^{A_p(\eta_p - 1)\tau_s}\right\|_{\infty} \cdot \|W^{\dagger}_p\|_{\infty}
\right)\mu_0,
\end{align*}
then the quantized output $q_k$ is zero for $kT_s \leq T_0$. 
This means that $\xi(t) = u(t) = 0$ for $t \leq T_0$.
Therefore if $\delta$ additinally satisfies 
\[
e^{\max_{p \in \mathcal{P}} \|A_p\|_{\infty} T_0} \delta 
<\varepsilon,
\]
then we have \eqref{eq:x_bound_Lyapunov_beforeT_0}.

In Fig.~\ref{fig:lyap},
we illustrate the state trajectory for the Lyapunov stability.

 \begin{figure}[t]
 \centering
 \includegraphics[width = 7cm,clip]{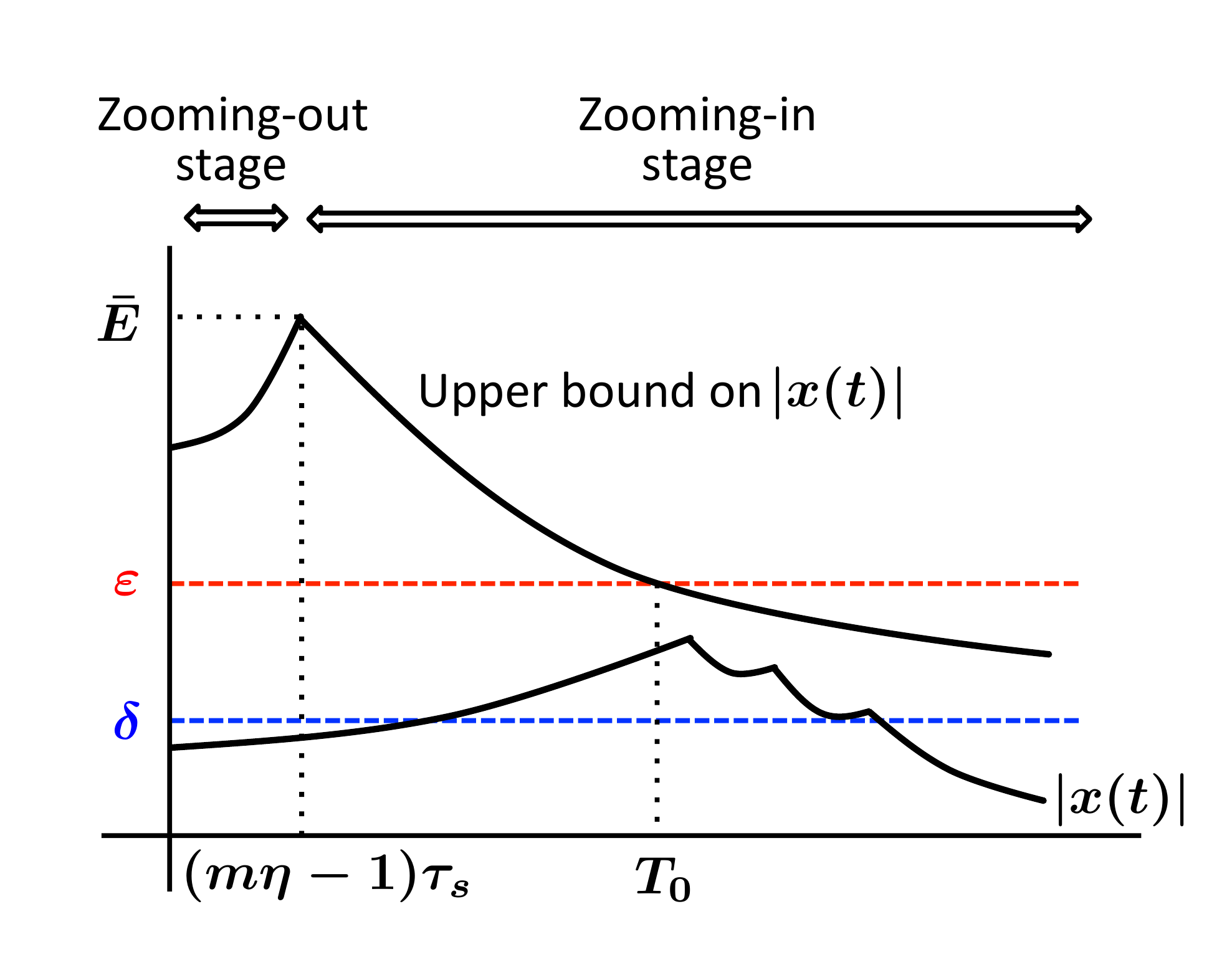}
 \caption{The state trajectory for the Lyapunov stability}
 \label{fig:lyap}
 \end{figure}%
%
%
%
%

\section{Numerical Example}
Consider a continuous-time switched system \eqref{eq:SLS} 
with the following two modes:
\begin{align*}
A_1 = 
\begin{bmatrix}
0 & -1 \\ -1 & -2
\end{bmatrix},\quad
&B_1 = 
\begin{bmatrix}
1 \\ -1
\end{bmatrix},\quad
C_1 = 
\begin{bmatrix}
1 & 1
\end{bmatrix}\\
A_2 = 
\begin{bmatrix}
1 & 2 \\ -2 & -1
\end{bmatrix},\quad
&B_2 = 
\begin{bmatrix}
-2 \\ 1
\end{bmatrix},\quad
C_2 = 
\begin{bmatrix}
1 & -1
\end{bmatrix}.
\end{align*}
The feedback gains of each mode are
$K_1 = [-1~~2]$ and $K_2 = [1~~-1]$.
Note that $A_1+B_1K_2$ and $A_2 + B_2K_1$
have an unstable pole $2.2361$ and $4$,
respectively.
The sampling period $\tau_s$ and
the partition number $N$ of the quantizer are
$\tau_s = 0.5$ and $N = 11$.

We took $Q_1$ and $Q_2$ in \eqref{eq:Lyapnov_equation} and
the parameters of $\nu_1$ and $\nu_2$ in \eqref{eq:nu_p_def}
as follows: 
$Q_1 = Q_2 = I$, 
$\kappa_1 = 1.124$, $\kappa_2 = 1.09$, 
$\rho_1 = 47$, and $\rho_2 = 80$. These were chosen by
trial and error.
We see from \eqref{eq:ADT_final_condition} that
if $\tau_a > 5.55$, our encoding and control strategy
achieves the global asymptotic stabilization.

A time response in the interval $[0,20]$ was calculated for
$x(0) = [-3~~3]^{\top}$, $\mu_0 = 0.1$, and $\chi = 1$.
The switching signal was chosen so that 
the dwell time $\tau_d = 2.6$ and
\eqref{eq:ADT_cond}
holds with $N_0 =1$ and the average dwell time $\tau_a = 5.8$.
Fig.~\ref{fig:simulation}
shows the Euclidean norm of
the state $x$ and the state estimate $\xi$. 
In this simulation, the ``zooming-out'' stage finished at $t=1$ but we observe that 
the system was not controlled until $t=2$. The reason is that
the state estimate is zero at the end of ``zooming-out'' stage; see \eqref{eq:ES_Initial}.
This leads to no control input in
the initial period of ``zooming-in'' stage as in
the ``zooming-out'' stage.

If the state of the plant is accessible, i.e., $C_1 = C_2 = I$, then we see from 
\cite[Assumption 3]{Liberzon2014} that if the number of symbols in the 
quantizer is not smaller than $3^2+1 = 10$, then 
the encoding and control strategy in \cite{Liberzon2014}
stabilizes the plant.
On the other hand, the counterpart in the output feedback case from
\eqref{eq:N_condition} is $5+1 = 6$.
Hence in this example, if we consider only stabilization
of systems with sufficiently large average-dwell time property, then 
the use of the output with lower dimension than the dimension of the state
has the advantage in terms of data rate.
 \begin{figure}[t]
 \centering
 \includegraphics[width = 7cm,bb= 35 20 700 510,clip]{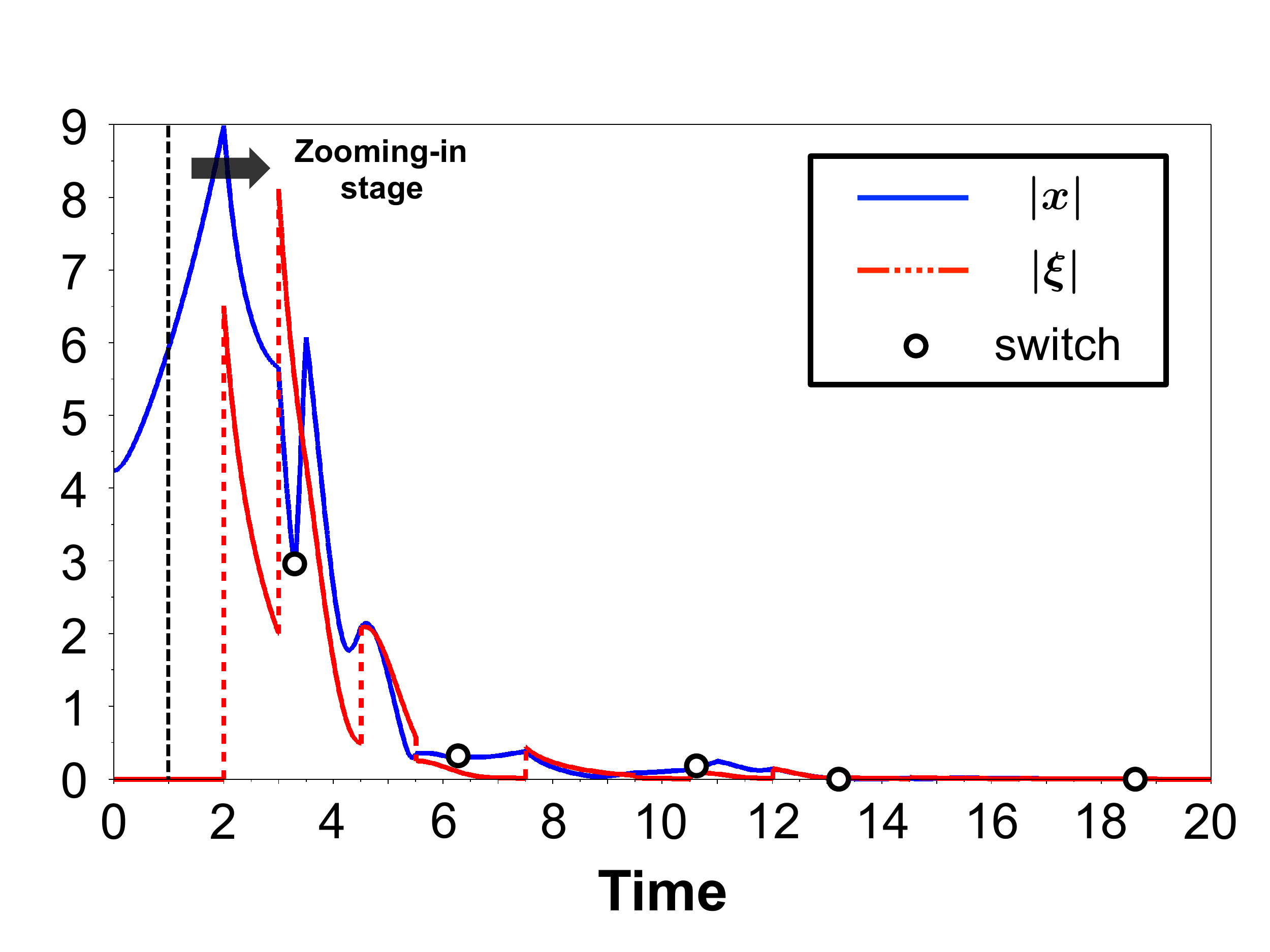}
 \caption{The Euclidean norm of the state $x$ and the estimated state $\xi$}
 \label{fig:simulation}
 \end{figure}%
%


\section{Concluding Remarks}
We have studied the problem of stabilizing 
a switched linear system with limited information: 
the quantized output and active mode at each sampling time.
We have supposed that the controller is given
and have examined the intersample behavior of the estimation error
for the encoding strategy after the detection of switching.
Using multiple discrete-time Lyapunov functions, we have achieved
global asymptotic stabilization under
the hybrid dwell-time assumption.
The data-rate bound used here is
the maximum among the bounds of the individual subsystems
that are from the earlier work.


\end{document}